\theoremstyle{thmstyleone}%
\newtheorem{theorem}{Theorem}
\newtheorem{proposition}[theorem]{Proposition}%
\theoremstyle{thmstyletwo}%
\newtheorem{remark}{Remark}%
\newtheorem{lemma}{Lemma}%
\theoremstyle{thmstylethree}%
\begin{document}

\title[Two-strain epidemic model]{A new oscillatory regime in two-strain epidemic models with partial cross-immunity}

\author*[1]{\fnm{Nir} \sur{Gavish}}\email{ngavish@technion.ac.il}
\affil*[1]{\orgdiv{Faculty of Mathematics}, \orgname{Technion - IIT}, \orgaddress{\city{Haifa}, \postcode{32000}, \country{Israel}}}

\abstract{
Infectious diseases often involve multiple strains that interact through the immune response generated after an infection. This study investigates the conditions under which a two-strain epidemic model with partial cross-immunity can lead to self-sustained oscillations, and reveals a new oscillatory regime in these models. Contrary to previous findings, which suggested that strong cross-immunity and significant asymmetry between strains are necessary for oscillations, our results demonstrate that sustained oscillations can occur even with weak cross-immunity and weak asymmetry. Using asymptotic methods, we provide a detailed mathematical analysis showing that the steady state of coexistence becomes unstable along specific curves in the parameter space, leading to oscillatory solutions for any value of the basic reproduction number greater than one. Numerical simulations support our theoretical findings, highlighting an unexpected oscillatory region in the parameter domain. These results challenge the current understanding of oscillatory dynamics in multi-strain epidemiological models, point to an oversight in previous studies, and suggest broader conditions under which such dynamics can arise.
}

\keywords{multiple strains, mathematical epidemiology, cross-immunity, oscillations, coexistence, multiple scales}

\maketitle

\section{Introduction}\label{sec:intro}
Infectious diseases often consist of multiple strains with examples including seasonal influenza, Haemophilus influenza, Streptococcus pneumonia, human immunodeficiency virus (HIV), tuberculosis, and Dengue virus; see~\cite{martcheva2015introduction,wormser2008modeling} and references within. These strains can interact indirectly through the immune response generated after an infection. Specifically, with {\em partial cross-immunity}, the immune response to an infection can provide some protection against subsequent infections with related strains. Here, $\sigma_{ij}$ represents the relative susceptibility to strain $j$ for an individual previously infected with strain $i$. If $\sigma_{ij}=0$, there is complete cross-immunity; $0<\sigma_{ij}<1$ indicates partial cross-immunity; and $\sigma_{ij}=1$ means no cross-immunity.

Epidemic models with multiple strains and partial cross-immunity can exhibit oscillatory dynamics. Previous studies have shown this behavior with as few as three interacting strains~\cite{andreasen1997dynamics,gomes2002dynamics,dawes2002onset,lin1999dynamics} or when factoring in changes in infectivity, quarantine, or age structure~\cite{ferguson1999effect,gupta1998chaos,nuno2005dynamics,nuno2008mathematical,thieme2007pathogen,kuddus2022analysis,castillo1989epidemiological}. However, the above works leave open the question whether oscillatory dynamics can arise in a simpler two-strain model solely due to partial cross-immunity, without additional factors.

Chung and Lui~\cite{chung2016dynamics} provide an answer to this question by demonstrating the emergence of self-oscillations in a two-strain influenza model.  The picture arising from their work is that two conditions are necessary for self-sustained oscillations:
1) {\em Strong cross-immunity:} The cross-immunity induced by at least one of the strains must be sufficiently strong.  This condition is intuitive since it is known that a multi-strain SIR system does not cause oscillations in the absence of cross-immunity~($\sigma_{12}=\sigma_{21}=1$).  Thus, it can be expected that systems with weak cross-immunity~($\sigma_{ij}\approx 1$) will also not give rise to oscillations.  This condition also arises in other works~\cite{andreasen1997dynamics,gomes2002dynamics,dawes2002onset,lin1999dynamics}.  2) {\em Strong asymmetry:} Adequate separation in cross-immunity between the two strains is essential. This condition comes from a proof suggesting that when there is symmetry or weak asymmetry, the steady state of co-existence is stable~\cite{chung2016dynamics}. Nevertheless, we will reveal a caveat in this proof.

Contrary to the above, our study shows that sustained oscillations can also occur with weak cross-immunity and weak asymmetry. In particular, we provide an example of sustained oscillations where one strain offers no cross-immunity ($\sigma_{12}=1$) and the other offers only weak cross-immunity ($\sigma_{21}=0.95$). This finding challenges current understanding and suggests that oscillatory dynamics can arise in a different range of conditions than previously thought.  

Our first key result, Theorem~\ref{theo:main_result}, shows that when one strain does not provide effective cross-immunity ($\sigma_{12}\approx1$), and the other strain provides partial cross-immunity ($0<\sigma_{21}<1$), there is a unique steady-state of coexistence that is unstable for a certain region in the parameter space. 
The discrepancy between our results and Theorem 1.2 in~\cite{chung2016dynamics} arises because the latter assumes that an approximation of the relevant eigenvalues is uniformly valid, while we prove that the approximation becomes invalid in certain regions of the parameter space, specifically along a curve we characterize.  Moreover, we directly associate the instabilities of the coexistence steady state with the cases in which the approximation is invalid.

As in~\cite{castillo1989epidemiological,chung2016dynamics}, our analysis takes advantage of the separation of time scales that arises naturally in many scenarios, especially for common pathogens in humans.  For example, in Influenza, Varicella, COVID-19, Rubeola, and others~\cite{CDCyellowbook2024}, the characteristic infectious period is on a scale of a week compared to a human life span of 70-75 years.   In terms of the model parameters, these examples correspond to a case in which the mortality rate is significantly lower than the recovery rates.  Using asymptotic methods, we are able to approximate the relevant quantities with the accuracy required to determine the stability of the coexistence steady state.

Nonetheless, our second main finding stands without approximations and applies to all positive mortality rates. We demonstrate the existence and uniqueness of a coexisting steady state within a specific parameter plane region, with its values derived from solving a cubic equation. This outcome serves as a robust foundation for further stability analysis.

In summary, our work expands the understanding of self-sustained oscillations in epidemiological systems, revealing an oversight in previous works and highlighting an unexpected oscillatory region in the parameter domain.

The paper is organized as follows. In Section~\ref{sec:mathModel}, we present the mathematical model and review relevant known results. In Section~\ref{sec:phiCE}, we consider the nonlinear algebraic system of eight equations for the compartment sizes of the coexistence steady-state, present a reduction of it to a cubic equation, and prove the existence of a unique feasible solution.  In Section~\ref{sec:CE_perturbation}, we focus on cases in which the mortality rate is significantly lower than the recovery rates.  Using asymptotic methods, we characterize the maximal parameter domain
in which there exists a feasible steady-state solution of coexistence and approximate its value.  Section~\ref{sec:stability_CE} is devoted to stability analysis of the coexistence steady-state, and in particular to the proof of Theorem~\ref{theo:main_result}.  Numerical examples are provided in Section~\ref{sec:numerics}.  In Section~\ref{sec:uniformExpansion}, we extend the stability analysis and provide a uniform approximation of the relevant eigenvalues in the full parameter space.  Section~\ref{sec:conclusions} provides a summary and discussion of our results.

\section{Model}\label{sec:mathModel}
We consider the transmission model presented in~\cite{chung2016dynamics,castillo1989epidemiological,martcheva2015introduction} for the dynamics of two strains with possible partial cross-immunity, i.e., when infection by one strain possibly provides reduced susceptibility to infection by the other strain.  For readability, in this section we present the mathematical model and review the related results~\cite{chung2016dynamics,castillo1989epidemiological,martcheva2015introduction}.

\begin{figure}[ht!]
\begin{center}
\includegraphics[width=0.75\textwidth]{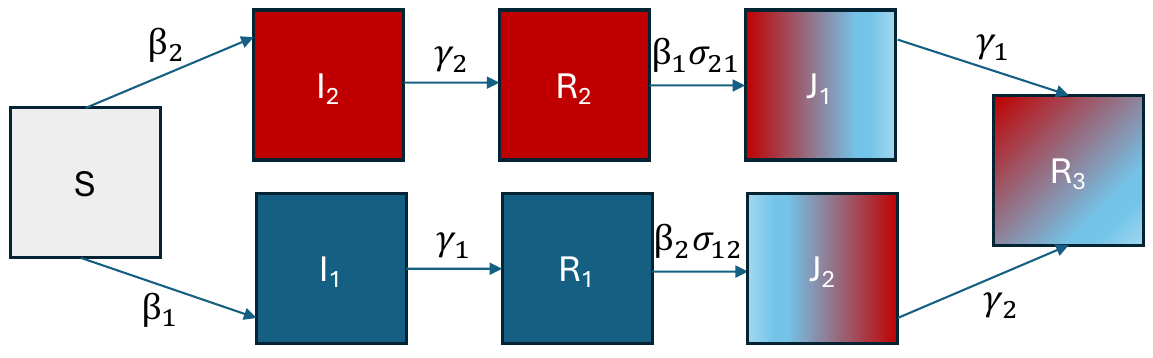}
\caption{Schematic diagram of disease dynamics when the host is exposed to two co-circulating strains. Susceptibles (S) may be infected with strain~$i=1$ or~$i=2$ ($I_i$, primary infection).  Those recovered from strain~$i$ ($R_i$, as a result of primary infection) are immune to reinfections by strain~$i$ but are possibly susceptible to infections by strain~$j\ne i$ ($J_j$, secondary infection).  Here,~$\sigma_{ij}$ is the relative susceptibility to strain~$j$ for an individual previously infected with and recovered from strain~$i$ ($i\ne j$), so that~$0<\sigma_{ij}<1$ corresponds to partial cross-immunity and~$\sigma_{ij}=1$ corresponds to a case with no cross-immunity.}
\label{fig:diagram}
\end{center}
\end{figure} The population is divided into eight different relative compartments: Susceptible (S), those infected with strain~$i$ ($I_i$, primary infection), those recovered from strain~$i$ ($R_i$, as a result of primary infection), those infected with strain~$i$ after they had recovered from strain~$j\ne i$ ($J_i$, secondary infection) and those recovered from both strains ($R_3$), see diagram in Figure~\ref{fig:diagram}. The population is assumed to mix randomly.  The model is given by~\cite{chung2016dynamics}
\begin{subequations} \label{eq:model}
\begin{align}
        \frac{\text{d}S}{\text{d}t} &=\mu-\sum_{i=1}^{2}\beta_i(I_i+J_i)S-\mu S,\label{eq:eqS}\\[1ex]
         \frac{\text{d}I_1}{\text{d}t}  &= \beta_1S(I_1+J_1)-(\mu+\gamma_1)I_1,\label{eq:eqI1}\\[1ex]
         \frac{\text{d}I_2}{\text{d}t}  &= \beta_2S(I_2+J_2)-(\mu+\gamma_2)I_2,\label{eq:eqI2}\\[1ex]
         \frac{\text{d}J_1}{\text{d}t}  &= \beta_1\sigma_{21}R_2(I_1+J_1)-(\mu+\gamma_1)J_1,\label{eq:eqJ1}\\[1ex]
         \frac{\text{d}J_2}{\text{d}t}  &= \beta_2\sigma_{12}R_1(I_2+J_2)-(\mu+\gamma_2)J_2,\label{eq:eqJ2}\\[1ex]
         \frac{\text{d}{R_1}}{\text{d}t} &= \gamma_1I_1-\beta_2\sigma_{12}(I_2+J_2)R_1-\mu R_1,\label{eq:eqR1}\\[1ex]
         \frac{\text{d}{R_2}}{\text{d}t} &= \gamma_2I_2-\beta_1\sigma_{21}(I_1+J_1)R_2-\mu R_2,\label{eq:eqR2}\\[1ex]
         \frac{\text{d}{R_3}}{\text{d}t} &=\gamma_1J_1+\gamma_2J_2-\mu R_3,\nonumber
\end{align}
\end{subequations}
were~$\mu$ is the birth rate, as well as the mortality rate,~$\beta_i$ denotes the transmission coefficient for strain~$i$, and~$\gamma_i$ denotes the recovery rate from strain~$i$. Additionally,~$\sigma_{ij}$ is the relative susceptibility to strain~$j$ for an individual previously infected with and recovered from strain~$i$ ($i\ne j$), so that~$0<\sigma_{ij}<1$ corresponds to reduced susceptibility (partial cross-immunity) and~$\sigma_{ij}=1$ corresponds to neutral susceptibility (no cross-immunity). 

We consider initial conditions which satisfy
\begin{equation}
\begin{split}
&S(0)\ge0, I_i(0)\ge0, J_i(0)\ge0, R_i(0)\ge0, R_3(0)\ge0\quad i=1,2,\\
&S(0)+I_1(0)+I_2(0)+J_1(0)+J_2(0)+R_1(0)+R_2(0)+R_3(0)=1,
\end{split}
\end{equation}
and, thus, ensure that for all~$t>0$,
\begin{equation}\label{eq:boundOnVariables}
\begin{split}
&S(t)\ge 0, I_i(t)\ge0, J_i(t)\ge0, R_i(t)\ge0, R_3(0)\ge0\quad i=1,2,\\
&S(t)+I_1(t)+I_2(t)+J_1(t)+J_2(t)+R_1(t)+R_2(t)+R_3(t)\equiv 1.
\end{split}
\end{equation}
In what follows, we use the algebraic relation~\eqref{eq:boundOnVariables} to express~$R_3(t)$ in terms of the other compartment sizes, and consider the model as a system of seven equations~(\ref{eq:eqS}-\ref{eq:eqR2}).
\subsection{Assumptions on model parameters}
We consider the model~\eqref{eq:model} in the case of partial or neutral cross-immunity,~$0<\sigma_{ij}\le 1$, and with vital dynamics~$\mu>0$.  Without loss of generality, we consider the case~$\sigma_{21}\le \sigma_{12}$.
Thus, the parameter space is 
\begin{equation}\label{eq:parameterspace}
0<\sigma_{21}\le \sigma_{12}\le 1,\qquad \mu>0,\quad \beta_i>0,\quad \gamma_i>0, \qquad i=1,2.
\end{equation}

\subsection{Basic reproduction number}\label{sec:R0}
The basic reproduction number, corresponding to model~\eqref{eq:model}, equals~\cite{chung2016dynamics,castillo1989epidemiological}
\[
\mathcal{R}_0=\max\{\mathcal{R}_1,\mathcal{R}_2\},
\]
where for~$i=1,2,$
\begin{equation}\label{eq:def_Ri}
\mathcal{R}_i:=\frac{\beta_i}{\gamma_i+\mu},
\end{equation}
The disease free equilibrium,~$S=1$ and~$I_i=J_i=R_i=0$ for~$i=1,2$, is stable if and only if~$\mathcal{R}_0<1$~\cite{chung2016dynamics,castillo1989epidemiological}.

\subsection{Single strain (boundary) steady-states}
We consider steady-states of~\eqref{eq:model} with only one strain,~$I_i>0$ and~$I_j=0$,~$i\ne j$.  

\begin{lemma}[Single strain steady-states~$\phi^{EE}_i$]\label{lem:singleStrainSteadyState}
Let~$\mu,\beta_1,\beta_2,\gamma_1,\gamma_2,\sigma_{12}$ and~$\sigma_{21}$ be parameters that satisfy~\eqref{eq:parameterspace}, and let us define~$\mathcal{R}_1$,~$\mathcal{R}_2$, by~\eqref{eq:def_Ri}.

If~$\mathcal{R}_i>1$ for~$i=1$ or ~$i=2$, then there exists a single steady-state solution~$\phi^{EE}_i$ of~$\eqref{eq:model}$ with~$I_i>0$ and~$I_j=0$,~$j\ne i$.
This solution satisfies
\begin{enumerate}
\item The compartment sizes of~$\phi^{EE}_i$ are
\[
S=\frac1{\mathcal{R}_i},\quad I_i=\frac{\mu}{\mu+\gamma_i}\frac{\mathcal{R}_i-1}{\mathcal{R}_i},\quad R_i=\frac{\gamma_i}{\mu+\gamma_i}\frac{\mathcal{R}_i-1}{\mathcal{R}_i},\quad I_j=J_i=J_j=R_j=R_3=0.
\]
\item $\phi^{EE}_i$ is linearly stable if and only if
\[
\mathcal{R}_j<\frac{(\gamma_i+\mu)\mathcal{R}_i}{(1+\sigma_{ij}(\mathcal{R}_i-1))\gamma_i+\mu},\quad j\ne i.
\]
\end{enumerate}
Otherwise, if~$\mathcal{R}_i\le 1$, the system~$\eqref{eq:model}$ does not have a steady-state solution with~$I_i>0$ and~$I_j=0$,~$j\ne i$.
\end{lemma}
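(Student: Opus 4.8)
The plan is to treat the two assertions separately: first the existence, uniqueness, and explicit form of the compartment sizes, and then the linear stability criterion. Throughout I take $i=1$ and $j=2$ without loss of generality.

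For the compartment sizes, I would start from the defining condition $I_2=0$ and show that it forces a cascade of vanishing compartments. Setting the right-hand sides of~\eqref{eq:model} to zero and using $I_2=0$, equation~\eqref{eq:eqI2} gives $\beta_2 S J_2=0$; since $S>0$ this yields $J_2=0$. Equation~\eqref{eq:eqR2} then forces $R_2\big(\beta_1\sigma_{21}(I_1+J_1)+\mu\big)=0$, hence $R_2=0$ because $\mu>0$; equation~\eqref{eq:eqJ1} next gives $J_1=0$, and finally $R_3=0$ from the last equation. What remains is the classical single-strain endemic system in $(S,I_1,R_1)$, namely $\mu-\beta_1 I_1 S-\mu S=0$, $\beta_1 S I_1=(\mu+\gamma_1)I_1$, and $\gamma_1 I_1=\mu R_1$. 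Since $I_1>0$, the second equation fixes $S=(\mu+\gamma_1)/\beta_1=1/\mathcal{R}_1$; substituting into the first and third yields the stated $I_1$ and $R_1$. This solution is unique, and feasibility ($I_1,R_1>0$) holds precisely when $\mathcal{R}_1>1$, which simultaneously establishes the non-existence claim when $\mathcal{R}_1\le 1$.

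For stability, the plan is to linearize the seven-dimensional system $(\ref{eq:eqS})$--$(\ref{eq:eqR2})$ at $\phi^{EE}_1$ and exploit a block-triangular structure. Evaluating the Jacobian at the steady state (where $I_2=J_1=J_2=R_2=0$), I would observe that the rows associated with $I_2$ and $J_2$ depend only on $I_2$ and $J_2$, because the terms differentiated with respect to $S$ and $R_1$ carry the factor $I_2+J_2=0$. Ordering the variables as $(I_2,J_2\mid S,I_1,R_1,J_1,R_2)$ therefore makes the Jacobian block lower triangular, with top-left $2\times2$ block equal to the strain-2 invasion matrix
\[
A=\begin{pmatrix} \beta_2 S^\ast-(\mu+\gamma_2) & \beta_2 S^\ast \\ \beta_2\sigma_{12}R_1^\ast & \beta_2\sigma_{12}R_1^\ast-(\mu+\gamma_2)\end{pmatrix},
\]
where $S^\ast=1/\mathcal{R}_1$ and $R_1^\ast$ are as in part~1. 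A second application of the same idea to the remaining $5\times5$ block peels off the $(J_1,R_2)$ pair, whose block is triangular with eigenvalues $-(\mu+\gamma_1)$ and $-(\mu+\beta_1\sigma_{21}I_1^\ast)$, and leaves the single-strain SIR core in $(S,I_1,R_1)$, whose eigenvalues are $-\mu$ together with those of a $2\times2$ matrix having negative trace and positive determinant. Hence every eigenvalue outside $A$ has negative real part, so stability is governed entirely by $A$.

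It then remains to analyze $A$: both its eigenvalues lie in the left half-plane iff $\operatorname{tr}A<0$ and $\det A>0$. A short computation gives $\det A=(\mu+\gamma_2)\big[(\mu+\gamma_2)-\beta_2(S^\ast+\sigma_{12}R_1^\ast)\big]$, and I would note that $\det A>0$ already forces $\operatorname{tr}A=\beta_2(S^\ast+\sigma_{12}R_1^\ast)-2(\mu+\gamma_2)<-(\mu+\gamma_2)<0$. Thus stability reduces to the single inequality $\beta_2(S^\ast+\sigma_{12}R_1^\ast)<\mu+\gamma_2$, i.e. $\mathcal{R}_2<1/(S^\ast+\sigma_{12}R_1^\ast)$; substituting the explicit values and simplifying the denominator to $\mu+\gamma_1+\sigma_{12}\gamma_1(\mathcal{R}_1-1)=(1+\sigma_{12}(\mathcal{R}_1-1))\gamma_1+\mu$ yields exactly the stated criterion. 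The main obstacle I anticipate is organizational rather than deep: correctly ordering the seven variables so that the Jacobian is manifestly block lower triangular and verifying that the successive reductions lose no eigenvalues. Once the invasion block $A$ is isolated, the remaining work is the routine trace--determinant computation and the algebraic simplification of the determinant condition into the form stated in the lemma.
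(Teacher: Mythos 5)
Your proof is correct, and it is essentially the paper's proof in the only sense available: the paper proves nothing itself here, deferring to~\cite[Section 3]{castillo1989epidemiological} and~\cite[Section 2]{chung2016dynamics}, and your self-contained argument is the standard one carried out in those references --- the cascade $I_2=0\Rightarrow J_2=R_2=J_1=R_3=0$ reducing the system to the classical SIR endemic equilibrium, followed by the block lower-triangular linearization in which the strain-$2$ invasion block $A$ alone decides stability, settled by the trace--determinant criterion, with the determinant condition simplifying exactly to the stated threshold. The only micro-gap is the unjustified assertion $S>0$ used to conclude $J_2=0$: it follows in one line from the steady-state version of~\eqref{eq:eqS}, since $S=0$ there would force $\mu=0$, contradicting~\eqref{eq:parameterspace}.
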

\begin{proof}
    See~\cite[Section 3]{castillo1989epidemiological} or~\cite[Section 2]{chung2016dynamics}.
\end{proof}

\section{Coexistence (interior) steady-state}\label{sec:phiCE}
In what follows, we aim to find when the system~\eqref{eq:model} gives rise to a steady-state of coexistence~$\phi_{CE}$ for which~$I_1>0$ and~$I_2>0$,   
and to find the compartment sizes at this equilibrium.  
The steady-state~$\phi_{CE}$ consists of eight variables~$(S^*,I_1^*,I^*_2,J^*_1,J^*_2,R^*_1,R^*_2,R^*_3)$ that are a solution of a nonlinear algebraic system of eight equations.  One can rarely expect to find an explicit solution for such a system, unless in carefully selected cases. Nevertheless, we show how the algebraic system of eight equations can be reduced to a cubic equation and utilize this reduction to prove that a steady-state of coexistence uniquely exists in a region of the parameter space.
\begin{proposition}[Reduction to a cubic  equation for~$\phi_{CE}$]\label{prop:reduceSystem}
    Let~$\mu$,~$\{\beta_i,\gamma_i,\}_{i=1}^2$,~$\sigma_{12}$ and~$\sigma_{21}$ be parameters that satisfy~\eqref{eq:parameterspace}, and let
\[
\phi_{CE}=(S^*,I_1^*,I^*_2,J^*_1,J^*_2,R^*_1,R^*_2,R_3^*)
\]
be a steady-state solution of~\eqref{eq:model} with~$I_1^*\ne0$ and~$I_2^*\ne0$.  
Then,~$S^*$ is a solution of the cubic equation
\begin{subequations}\label{eq:cubic_eq4S_with_ci}
\begin{equation}\label{eq:cubic_eq4S}
P_S(S^*)=c_3(S^*)^3+c_2 (S^*)^2+c_1 S^*+c_0=0,
\end{equation}
with
\begin{equation}\label{eq:ci}
\begin{split}
c_3&=[\sigma_{21}(1-\sigma_{12})\gamma_1+\sigma_{12}(1-\sigma_{21})\gamma_2-\mu(\sigma_{21}+\sigma_{12}-\sigma_{12}\sigma_{21})]\mu\mathcal{R}_1\mathcal{R}_2,\\
c_2&=-[\sigma_{12}(1-\sigma_{21} )\mathcal{R}_1+\sigma_{21}(1-\sigma_{12} )\mathcal{R}_2]\gamma_1\gamma_2-\\&\left[(\gamma_1 + \gamma_2)\sigma_{12}\sigma_{21}\mathcal{R}_1\mathcal{R}_2 -\gamma_1\sigma_{21}\mathcal{R}_1-\gamma_2\sigma_{12}\mathcal{R}_2+\right.\\&\left.(\gamma_1+\gamma_2)(\sigma_{21}-1)\sigma_{12}\mathcal{R}_1+(\gamma_1+\gamma_2)(\sigma_{12}-1)\sigma_{21}\mathcal{R}_2\right]\mu-\\&
[\sigma_{12}\sigma_{21}\mathcal{R}_1\mathcal{R}_2-(\sigma_{12} + \sigma_{21}-\sigma_{12}\sigma_{21}  )(\mathcal{R}_1+\mathcal{R}_2)]\mu^2,\\
c_1&=(\gamma_1+\mu)(\gamma_2+\mu)[\sigma_{12}\sigma_{21}(\mathcal{R}_1+\mathcal{R}_2) + \sigma_{12}\sigma_{21} - \sigma_{12} -\sigma_{21}]\\
c_0&=-(\gamma_1+\mu)(\gamma_2+\mu)\sigma_{12}\sigma_{21}.
\end{split}
\end{equation}
\end{subequations}
The remaining compartments,~$I_1^*,I^*_2,J^*_1,J^*_2,R^*_1,R_2^*$, and~$R^*_3$, are defined by~$S^*$,
\begin{subequations}\label{eq:reducedSystem}
\begin{equation}\label{eq:Ii}
I_i^*(S^*)=\frac{(\gamma_j+\mu)(1-\mathcal{R}_jS^*)R_j^*(S^*) + \gamma_j\mathcal{R}_j S^* R_i^*(S^*)}{\gamma_1\gamma_2(1 - (\mathcal{R}_1+\mathcal{R}_2)S^*)+(1-\mathcal{R}_1S^*)(1-\mathcal{R}_2S^*)(\gamma_1 + \gamma_2+\mu)\mu}\mathcal{R}_iS^*\mu,\quad i=1,2.
\end{equation}
where
\begin{equation}\label{eq:Ji}
J_1^*(S^*)=\frac{\gamma_2 I_2^*(S^*)-\mu R_2^*(S^*)}{\gamma_1+\mu},\quad J_2^*(S^*)=\frac{\gamma_1 I_1^*(S^*)-\mu R_1^*(S^*)}{\gamma_2+\mu},
\end{equation}
\begin{equation}\label{eq:Ri}
R^*_1(S^*)=\frac{1-\mathcal{R}_2S^*}{\mathcal{R}_2\sigma_{12}},\quad R_2^*(S^*)=\frac{1-\mathcal{R}_1S^*}{\mathcal{R}_1\sigma_{21}},
\end{equation}
and~$\mu R_3^*=\gamma_1 J_1^*(S^*)+\gamma_2 J_2^*(S^*)$.
\end{subequations}
\end{proposition}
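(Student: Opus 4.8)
The plan is to use the fact that at a coexistence steady state every right-hand side of \eqref{eq:eqS}--\eqref{eq:eqR2} vanishes, and to eliminate the compartments one group at a time until everything reduces to a single scalar equation in $S^*$. First I would record the nondegeneracy that legitimizes the divisions below: since $I_i^*>0$, equation \eqref{eq:eqI1} (resp.\ \eqref{eq:eqI2}) forces $S^*>0$, and $A_i^*:=I_i^*+J_i^*\ge I_i^*>0$. Adding \eqref{eq:eqI1} and \eqref{eq:eqJ1} gives $\beta_1 A_1^*(S^*+\sigma_{21}R_2^*)=(\mu+\gamma_1)A_1^*$; dividing by $A_1^*$ and using \eqref{eq:def_Ri} yields $\mathcal{R}_1(S^*+\sigma_{21}R_2^*)=1$, which is exactly the expression for $R_2^*$ in \eqref{eq:Ri}, and the symmetric manipulation on \eqref{eq:eqI2}, \eqref{eq:eqJ2} produces $R_1^*(S^*)$. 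Next, substituting the secondary-infection balance \eqref{eq:eqJ1} (which identifies $\beta_1\sigma_{21}R_2^*A_1^*$ with $(\mu+\gamma_1)J_1^*$) into \eqref{eq:eqR2}, and symmetrically \eqref{eq:eqJ2} into \eqref{eq:eqR1}, collapses the $R$-equations to the linear relations \eqref{eq:Ji} expressing $J_1^*,J_2^*$ through $I_1^*,I_2^*,R_1^*,R_2^*$.

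At this stage every compartment is known in terms of $S^*$ and the pair $(I_1^*,I_2^*)$. To pin these down I would exploit a second expression for each $J_i^*$: equation \eqref{eq:eqI1} rearranges to $J_1^*=\tfrac{1-\mathcal{R}_1 S^*}{\mathcal{R}_1 S^*}I_1^*$, and likewise for $J_2^*$. Equating these with \eqref{eq:Ji} yields a $2\times2$ linear system for $(I_1^*,I_2^*)$ with coefficient matrix $\bigl(\begin{smallmatrix} a & -\gamma_2 \\ -\gamma_1 & b\end{smallmatrix}\bigr)$, where $a=(\mu+\gamma_1)\tfrac{1-\mathcal{R}_1 S^*}{\mathcal{R}_1 S^*}$ and $b=(\mu+\gamma_2)\tfrac{1-\mathcal{R}_2 S^*}{\mathcal{R}_2 S^*}$, and right-hand side $(-\mu R_2^*,-\mu R_1^*)$. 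Solving by Cramer's rule gives $I_i^*(S^*)$, and the key algebraic identity to verify is that the determinant $ab-\gamma_1\gamma_2$, once multiplied by $\mathcal{R}_1\mathcal{R}_2(S^*)^2$, collapses exactly to the denominator in \eqref{eq:Ii}: the $\gamma_1\gamma_2\mathcal{R}_1\mathcal{R}_2(S^*)^2$ terms cancel, leaving $\gamma_1\gamma_2(1-(\mathcal{R}_1+\mathcal{R}_2)S^*)+(1-\mathcal{R}_1S^*)(1-\mathcal{R}_2S^*)(\gamma_1+\gamma_2+\mu)\mu$. Sign bookkeeping needs care here, since in the feasible regime $1-\mathcal{R}_iS^*>0$ while the determinant need not keep a fixed sign, so it must be tracked together with the numerators to confirm $I_i^*>0$.

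The only equation not yet used is the susceptible balance \eqref{eq:eqS}. Rewriting it with \eqref{eq:eqI1}--\eqref{eq:eqI2} as $\mu(1-S^*)=(\mu+\gamma_1)I_1^*+(\mu+\gamma_2)I_2^*$ and inserting the closed forms $I_i^*(S^*)$ produces a single rational equation in $S^*$. Multiplying through by the common denominator from \eqref{eq:Ii} and by $\sigma_{12}\sigma_{21}$ (to clear the $R_i^*$ expressions \eqref{eq:Ri}) leaves a polynomial identity; since each numerator $I_i^*(S^*)$ is cubic in $S^*$ and $\mu(1-S^*)$ times the denominator is also cubic, the outcome is the cubic \eqref{eq:cubic_eq4S}. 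I expect this final collection of terms to be the main obstacle: confirming that the expansion reproduces the coefficients $c_0,\ldots,c_3$ of \eqref{eq:ci} verbatim is a lengthy but entirely mechanical symbolic computation, which I would organize by grouping according to powers of $\mu$ and of the products $\sigma_{12},\sigma_{21}$, and cross-check with a computer algebra system. Finally, $R_3^*$ is read off directly from $\mu R_3^*=\gamma_1 J_1^*+\gamma_2 J_2^*$, completing the reduction.
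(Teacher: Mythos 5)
Your proposal is correct and is essentially the paper's own proof, spelled out in full: the paper likewise obtains~\eqref{eq:Ji} from $J_1^\prime+R_2^\prime=0=J_2^\prime+R_1^\prime$ (your substitution of~\eqref{eq:eqJ1} into~\eqref{eq:eqR2} is the same computation), derives~\eqref{eq:Ri} from the primary/secondary infection balances (your trick of adding~\eqref{eq:eqI1} and~\eqref{eq:eqJ1} is an equivalent variant of the paper's ratio identities~\eqref{eq:IiJi_rel1}), then ``solves the system (\ref{eq:Ji},\ref{eq:Ri},\ref{eq:IiJi_rel1})'' --- which is precisely your $2\times2$ Cramer step, whose determinant identity you state correctly --- and finally substitutes into~\eqref{eq:eqS} to obtain the cubic.

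Two remarks. First, a small hypothesis mismatch: the proposition assumes only $I_i^*\ne0$, not positivity, so your opening claims ``$I_i^*>0$ forces $S^*>0$ and $A_i^*\ge I_i^*>0$'' are not available as stated; the divisions are justified as in the paper, namely $\mu>0$ and~\eqref{eq:eqS} give $S^*\ne0$, and then~\eqref{eq:eqI1}--\eqref{eq:eqI2} with $I_i^*\ne0$ give $I_i^*+J_i^*\ne0$ (positivity of the compartments is the content of Theorem~\ref{theo:existenceCE}, so your closing concern about confirming $I_i^*>0$ is out of scope for this proposition). Second, your instinct that ``sign bookkeeping needs care'' is better founded than you may realize, and the issue lies in the printed statement rather than in your derivation: Cramer's rule gives the numerator $-\mu(bR_2^*+\gamma_2R_1^*)$, so after clearing $\mathcal{R}_1\mathcal{R}_2(S^*)^2$ your closed form is the \emph{negative} of~\eqref{eq:Ii} as printed. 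A numerical spot check (e.g.\ the symmetric case $\gamma_1=\gamma_2=1$, $\mathcal{R}_1=\mathcal{R}_2=2$, $\sigma_{12}=\sigma_{21}=1/2$, $\mu=0.01$, where the steady state has $S^*\approx0.3915$ and $I_i^*\approx 3.0\cdot10^{-3}$) confirms that your sign agrees with the true steady state and with Lemma~\ref{lem:simplifiedIiJi}, while~\eqref{eq:Ii} as printed evaluates to $-I_1^*$. The same check shows that the printed $c_2$ and $c_3$ in~\eqref{eq:ci} carry sign slips as well: at $\mu=0$ the printed cubic reduces to $\gamma_1\gamma_2(-as^2+bs+c)$ rather than to a multiple of the quadratic~\eqref{eq:quadratic_eq4S}, and the printed $c_3$ is positive for small $\mu$, contradicting the inequality $c_3<0$ invoked in the proof of Theorem~\ref{theo:existenceCE}. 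So your final ``verbatim'' coefficient check would fail as stated --- not because your elimination is wrong, but because your route reproduces the corrected cubic on which the paper's subsequent arguments (the signs $c_0,c_3<0$, the boundary values~\eqref{eq:Ps_at_1Ri}, and the $\mu=0$ reduction to~\eqref{eq:quadratic_eq4S}) actually rely.
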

\begin{proof}
The steady-state solution satisfies~$J_1^\prime(t)+R_2^\prime(t)=0=J_2^\prime(t)+R_1^\prime(t)$.  Thus, Equations~(\ref{eq:eqJ1}- \ref{eq:eqR2}) imply~\eqref{eq:Ji}.
 Since~$\mu>0$, Equation~\eqref{eq:eqS} implies that~$S^*\ne0$.  Thus,~$I_i^*\ne0$ implies that~$I_i^*+J_i^*\ne0$,~$i=1,2$, see~(\ref{eq:eqI1},\ref{eq:eqI2}).  Hence, 
Equations~(\ref{eq:eqI1},\ref{eq:eqI2}) imply
\begin{equation}\label{eq:IiJi_rel1}
\frac{I_1^*}{I^*_1+J^*_1}=\mathcal{R}_1S^*,\quad \frac{I^*_2}{I^*_2+J^*_2}=\mathcal{R}_2S^*.
\end{equation}
We next derive relation~\eqref{eq:Ri}: 
By~\eqref{eq:IiJi_rel1},
\begin{equation}\label{eq:IiJi_rel2}
\frac{J_1^*}{I_1^*+J_1^*}=1-\mathcal{R}_1S^*,\quad \frac{J^*_2}{I^*_2+J^*_2}=1-\mathcal{R}_2S^*.
\end{equation}
Substituting the above expression in~(\ref{eq:eqJ1},\ref{eq:eqJ2}) defines~$R^*_i$ as a linear function of~$S^*$ for~$\sigma_{ij}>0$ 
\[
\mathcal{R}_1\sigma_{21}R_2^*=\frac{J_1^*}{I_1^*+J^*_1}=1-\mathcal{R}_1S^*,\quad \mathcal{R}_2\sigma_{12}R^*_1=\frac{J_2^*}{I_2^*+J_2^*}=1-\mathcal{R}_2S^*.
\]
Solving the system~(\ref{eq:Ji},\ref{eq:Ri},\ref{eq:IiJi_rel1})  yields~\eqref{eq:Ii} for~$I_1^*(S^*)$ and~$I_2^*(S^*)$.
Substituting~\eqref{eq:reducedSystem} in~\eqref{eq:eqS} yields the cubic equation~\eqref{eq:cubic_eq4S_with_ci} for~$S^*$.
\end{proof}

Proposition~\ref{prop:reduceSystem} explicitly defines the coexistence steady-state solution~$\phi_{CE}$, if exists, in terms of a solution to a cubic equation.  The following theorem builds upon this result to prove that a steady-state of coexistence uniquely exists in a region of the parameter space.
\begin{theorem}[Existence and uniqueness of~$\phi_{CE}$]\label{theo:existenceCE}
        Let~$\mu$,~$\{\beta_i,\gamma_i\}_{i=1}^2$,~$\sigma_{12}$ and~$\sigma_{21}$ be parameters that satisfy~\eqref{eq:parameterspace}.
    Define
    \begin{subequations}\label{eq:Omega}
    \begin{equation}
        \Omega^\mu=\Omega^\mu_1\cap \Omega^\mu_2,
    \end{equation}    
    where
    \begin{equation}\label{eq:omega_i}
    \Omega^\mu_i=\left\{\mathcal{R}_i>1,\,\mathcal{R}_j>\frac{(\gamma_i+\mu)\mathcal{R}_i}{(1+\sigma_{ij}(\mathcal{R}_i-1))\gamma_i+\mu}\right\},\quad j\ne i,
    \end{equation}
        \end{subequations}
        and~$\mathcal{R}_i$ is defined by~\eqref{eq:def_Ri}.
    If~$(\mathcal{R}_1,\mathcal{R}_2)\in \Omega^\mu$, then the system~\eqref{eq:model} has a unique steady-state solution~$\phi_{CE}$ satisfying
        \begin{enumerate}
    \item Feasibility: Satisfies~\eqref{eq:boundOnVariables}.
    \item Coexistence:~$I_1^*>0$ and~$I_2^*>0$.
    \end{enumerate}
\end{theorem}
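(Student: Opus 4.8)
The plan is to build on Proposition~\ref{prop:reduceSystem}: every coexistence steady state is determined by a root $S^*$ of the cubic \eqref{eq:cubic_eq4S}, with the remaining compartments recovered from \eqref{eq:reducedSystem}. So the theorem reduces to showing that, when $(\mathcal{R}_1,\mathcal{R}_2)\in\Omega^\mu$, exactly one root of $P_S$ lies in the window where \eqref{eq:reducedSystem} produces a feasible, coexisting state. First I would delimit that window. By \eqref{eq:Ri}, $R_1^*,R_2^*\ge 0$ is equivalent to $S^*\le 1/\mathcal{R}_1$ and $S^*\le 1/\mathcal{R}_2$, i.e.\ $S^*\in(0,1/\mathcal{R}_0)$ with $\mathcal{R}_0=\max\{\mathcal{R}_1,\mathcal{R}_2\}$, strictness being forced by coexistence. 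The constraints $J_i^*\ge0$ and $R_3^*\ge0$ are then automatic: combining \eqref{eq:Ji} with the steady-state forms of \eqref{eq:eqI1}--\eqref{eq:eqI2} gives $J_i^*=I_i^*\,(1-\mathcal{R}_iS^*)/(\mathcal{R}_iS^*)\ge0$ once $I_i^*\ge0$, and $R_3^*$ is a positive combination of the $J_i^*$. The only genuine restriction is $I_1^*,I_2^*>0$; solving the linear system \eqref{eq:Ji}, \eqref{eq:Ri} together with the steady-state forms of \eqref{eq:eqI1}--\eqref{eq:eqI2} for $(I_1^*,I_2^*)$ yields expressions that are negative for small $S$, diverge to $+\infty$ at the unique interior zero $S_\star\in(0,1/\mathcal{R}_0)$ of their common denominator, and are positive on $(S_\star,1/\mathcal{R}_0)$. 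Thus the feasibility window is the subinterval $(S_\star,1/\mathcal{R}_0)$.

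For existence I would pass to the scalar residual of \eqref{eq:eqS}. Using the steady-state forms of \eqref{eq:eqI1}--\eqref{eq:eqI2} to replace $\beta_iS^*(I_i^*+J_i^*)$ by $(\gamma_i+\mu)I_i^*$, the equation \eqref{eq:eqS} becomes $G(S^*)=0$ with
\[
G(S)=\mu(1-S)-(\gamma_1+\mu)I_1^*(S)-(\gamma_2+\mu)I_2^*(S),
\]
and the zeros of $G$ on $(S_\star,1/\mathcal{R}_0)$ coincide with the feasible roots of $P_S$. Since $I_i^*\to+\infty$ as $S\downarrow S_\star$, we have $G(S)\to-\infty$ there. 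At the right endpoint, taking $\mathcal{R}_0=\mathcal{R}_1$ (the other case is symmetric), one has $R_2^*(1/\mathcal{R}_1)=0$, hence $I_2^*(1/\mathcal{R}_1)=0$, and a direct simplification yields
\[
G(1/\mathcal{R}_1)=\frac{\mu}{\mathcal{R}_1}\left[(\mathcal{R}_1-1)-\frac{(\gamma_1+\mu)(\mathcal{R}_1-\mathcal{R}_2)}{\sigma_{12}\gamma_1\mathcal{R}_2}\right].
\]
The bracket is positive exactly when $\mathcal{R}_2>(\gamma_1+\mu)\mathcal{R}_1/[(1+\sigma_{12}(\mathcal{R}_1-1))\gamma_1+\mu]$, which is precisely the inequality defining $\Omega^\mu_1$ in \eqref{eq:omega_i}; equivalently, by Lemma~\ref{lem:singleStrainSteadyState} it is the condition that the boundary state $\phi^{EE}_1$ be unstable. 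Hence under $\Omega^\mu$ we get $G(1/\mathcal{R}_0)>0$, and the intermediate value theorem furnishes a root $S^*\in(S_\star,1/\mathcal{R}_0)$. By construction this $S^*$ yields $R_i^*,I_i^*>0$ and, via \eqref{eq:boundOnVariables}, a normalized nonnegative state, so $\phi_{CE}$ exists and coexists.

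For uniqueness I would return to the cubic and locate all three of its roots relative to $(S_\star,1/\mathcal{R}_0)$. Repeating the endpoint computation at both $1/\mathcal{R}_1$ and $1/\mathcal{R}_2$ shows that $P_S(1/\mathcal{R}_i)>0$ precisely when the $\Omega^\mu_i$ inequality holds, so under $\Omega^\mu$ we have $P_S(0)=c_0<0$ while $P_S(1/\mathcal{R}_1)>0$ and $P_S(1/\mathcal{R}_2)>0$, and $P_S$ retains the sign of $c_0$ on $(0,S_\star]$. These signs already place one root in $(S_\star,1/\mathcal{R}_0)$, and I would pin the remaining two outside it by combining the sign of the leading coefficient $c_3$ with Vieta's relations for \eqref{eq:cubic_eq4S} (the product of the roots being $-c_0/c_3$ and their sum $-c_2/c_3$), forcing them either below $S_\star$ and above $1/\mathcal{R}_0$ or into a complex-conjugate pair.

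The step I expect to be the main obstacle is excluding three roots inside the window, which the endpoint signs alone do not rule out. I would close this by a monotonicity argument: writing $H(S)=(\gamma_1+\mu)I_1^*(S)+(\gamma_2+\mu)I_2^*(S)$, so that $G=\mu(1-S)-H$ and $G'=-\mu-H'$, the goal is the derivative bound $H'(S)<-\mu$ throughout $(S_\star,1/\mathcal{R}_0)$, i.e.\ the total infected mass decreases steeply enough as $S$ increases from the pole at $S_\star$ to its finite value at $1/\mathcal{R}_0$. This bound gives $G'>0$ on the whole window, so $G$ is strictly increasing and crosses zero exactly once, which together with the existence step yields the claim. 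Establishing $H'<-\mu$ --- differentiating the rational expressions \eqref{eq:Ii} and controlling the numerator against the single-signed denominator on $(S_\star,1/\mathcal{R}_0)$ --- is the technical heart of the argument.
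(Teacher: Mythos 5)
Your skeleton coincides with the paper's proof: both reduce existence to locating roots of the cubic \eqref{eq:cubic_eq4S_with_ci} in the window $\bigl(0,\min_i 1/\mathcal{R}_i\bigr)$, and your endpoint quantity is the paper's \eqref{eq:Ps_at_1Ri} in disguise --- your bracket $(\mathcal{R}_1-1)-\frac{(\gamma_1+\mu)(\mathcal{R}_1-\mathcal{R}_2)}{\sigma_{12}\gamma_1\mathcal{R}_2}$ is algebraically equivalent to the factor $\mathcal{R}_2-\frac{(\gamma_1+\mu)\mathcal{R}_1}{(1+\sigma_{12}(\mathcal{R}_1-1))\gamma_1+\mu}$ there, so your IVT step on $G$ is the paper's sign-change argument on $P_S$ transplanted to the residual of \eqref{eq:eqS}. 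Two of your moves are genuinely cleaner than the paper's. First, deducing $J_i^*=I_i^*(1-\mathcal{R}_iS^*)/(\mathcal{R}_iS^*)$ from \eqref{eq:IiJi_rel1} settles $J_i^*>0$ and $R_3^*>0$ in one line, whereas the paper goes through the auxiliary quantities $A_i$ and the inequality chain \eqref{eq:Ai_positive}--\eqref{eq:boundp_partial_sum} (the paper needs $A_i>0$ later for the stability analysis, but not for this theorem). Second, your sign bookkeeping for $I_i^*(S)$ --- negative on $(0,S_\star)$, pole at the interior zero $S_\star$ of the common denominator, positive on $(S_\star,1/\mathcal{R}_0)$ --- is correct: re-deriving $I_1^*$ from \eqref{eq:Ji}, \eqref{eq:Ri}, \eqref{eq:IiJi_rel1} gives a denominator equal to \emph{minus} the one displayed in \eqref{eq:Ii} (the printed formula carries a sign typo, as one can check against \eqref{eq:simplifiedI1}, which is positive on the whole window), and with the corrected sign your pole location, your limit $G\to-\infty$ as $S\downarrow S_\star$, and your value of $G(1/\mathcal{R}_1)$ all check out. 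Your observation that $P_S$ keeps the sign of $c_0$ on $(0,S_\star]$, so no spurious root hides left of the pole, is also correct and is implicit rather than explicit in the paper.

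The genuine gap is uniqueness. You correctly identify that the endpoint signs allow one \emph{or three} roots in $(S_\star,1/\mathcal{R}_0)$, but the monotonicity program you propose to close this --- the derivative bound $H'(S)<-\mu$ throughout the window --- is left entirely unproven, it is the load-bearing step, and there is no evident route to it: near the right endpoint the sign of $(I_1^*)'$ is not controlled, so strict monotonicity of $G$ on the whole window is far from clear. Moreover, the bound is unnecessary. The paper closes uniqueness with a global root count that you mention (``sign of the leading coefficient $c_3$ with Vieta's relations'') and then abandon: since $c_3<0$ and $c_0<0$, one has $P_S(s)\to+\infty$ as $s\to-\infty$ and $P_S(s)\to-\infty$ as $s\to+\infty$; together with $P_S(0)=c_0<0$ and $P_S(1/\mathcal{R}_i)>0$ for $i=1,2$, this places one root in $(-\infty,0)$, one in the window, and one in $(\max_i 1/\mathcal{R}_i,\infty)$, and a cubic has only three roots --- no Vieta manipulation and no derivative bound needed. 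One caution if you take this route: the displayed expression for $c_3$ in \eqref{eq:ci} would be \emph{positive} for small $\mu$ and $\sigma_{ij}<1$, so verify the leading coefficient directly; clearing denominators in your own $G\cdot(\text{denominator})$ formulation yields the leading coefficient $\sigma_{12}\sigma_{21}\mu^2\mathcal{R}_1\mathcal{R}_2\left[\gamma_1+\gamma_2+\mu-\frac{\gamma_1+\mu}{\sigma_{12}}-\frac{\gamma_2+\mu}{\sigma_{21}}\right]\le-\sigma_{12}\sigma_{21}\mu^3\mathcal{R}_1\mathcal{R}_2<0$ for $\sigma_{ij}\le1$, confirming the sign the paper's argument actually uses. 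With that substitution your proof is complete; as written, it is not.
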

\begin{proof}
    By~\eqref{eq:Ri},
    \[
    R_i^*>0\quad \iff\quad S^*<\frac{1}{\mathcal{R}_j},\qquad i=1,2,\quad j\ne i.
    \]
    We now show that there if~$(\mathcal{R}_1,\mathcal{R}_2)\in \Omega^\mu$, then there exists a unique solution of~\eqref{eq:cubic_eq4S_with_ci} satisfying
\begin{equation}\label{eq:boundsS*}
0<S^*<\min_{i=1,2}\frac{1}{\mathcal{R}_i}.
\end{equation}
To do so, we show that the cubic polynomial~$P_S(s)$ for~$S^*$,~\eqref{eq:cubic_eq4S_with_ci}, changes sign between~$s=0$ and~$s=\min_{i=1,2}\frac{1}{\mathcal{R}_i}$.
Indeed,~$P_S(0)<0$ since~$c_0<0$, see~\eqref{eq:ci}.
Additionally, 
\begin{equation}\label{eq:Ps_at_1Ri}
P_S\left(\frac{1}{\mathcal{R}_i}\right)=\frac{(1+\sigma_{ij}(\mathcal{R}_i-1))\gamma_i+\mu}{\mathcal{R}_i^2}\frac{\gamma_j\sigma_{ji}(\gamma_1+\mu)(\gamma_2+\mu)}{\mathcal{R}_i^2}\left[\mathcal{R}_j-\frac{(\gamma_i+\mu)\mathcal{R}_i}{(1+\sigma_{ij}(\mathcal{R}_i-1))\gamma_i+\mu}\right].
\end{equation}
Therefore, if~$(\mathcal{R}_1,\mathcal{R}_2)\in \Omega^\mu$ then~$P_S\left(\frac{1}{\mathcal{R}_i}\right)>0$ for~$i=1,2$.
This implies that~$P_S$ has a  root,~$S^*$, that satisfies~\eqref{eq:boundsS*}.  We now show that this is the only root of~$P_S$ that satisfies~\eqref{eq:boundsS*}. 
Indeed, since~$c_3<0$ and~$c_0<0$, see~\eqref{eq:ci},
\[
\lim_{s\to\pm \infty} P_S(s)=\mp \infty.
\]
Therefore,~$P_S(s)$ changes sign in the interval~$s<0$ and in the interval~$s>\max_{i=1,2}\mathcal{R}_i^{-1}$ so that it has two additional roots~$S_\pm$ that satisfy
\[
S_-<0,\quad S_+>\max_{i=1,2}\frac{1}{\mathcal{R}_i}.
\]
Therefore, out of the three roots of~$P_S(s)$, the root~$S^*$ is the only root that satisfies~\eqref{eq:boundsS*}.

Next, we now that if~$(\mathcal{R}_1,\mathcal{R}_2)\in \Omega^\mu_j$ then~$I_i>0$ for~$i=1,2$ and~$j\ne i$.  
To do so, we first present a parametrization of~$\Omega^\mu_2$. Let us define
\begin{subequations}\label{eq:dr_ds_rel}
\begin{equation}\label{eq:def_dr1}
d_{r_1}:=\frac{(1+\sigma_{21}(\mathcal{R}_2-1))\gamma_1+\mu}{(\gamma_2+\mu)\mathcal{R}_2}\left[\mathcal{R}_1-\frac{(\gamma_2+\mu)\mathcal{R}_2}{(1+\sigma_{21}(\mathcal{R}_2-1))\gamma_1+\mu}\right],
\end{equation}
so that for any~$\mathcal{R}_2>1$,
\[
(\mathcal{R}_1(d_{r_1},\mathcal{R}_2),\mathcal{R}_2)\in \Omega^\mu_2\iff d_{r_1}>0,\qquad \mathcal{R}_1(d_{r_1},\mathcal{R}_2)=\frac{(\gamma_2+\mu)\mathcal{R}_2}{(1+\sigma_{21}(\mathcal{R}_2-1))\gamma_1+\mu}(1+d_{r_1}).
\]
Substituting~$\mathcal{R}_1=\mathcal{R}_1(d_{r_1},\mathcal{R}_2)$ in~\eqref{eq:cubic_eq4S_with_ci} yields the following relation between~$d_{r_1}$,~$\mathcal{R}_2$ and~$S^*$,
\begin{equation}\label{eq:dr_ds_rel}
d_{r_1}=-\frac{(1+\sigma_{21}(\mathcal{R}_2-1))\gamma_1+\mu}{\gamma_2+\mu}\frac{1-\mathcal{R}_2S^*}{\frac{\partial c_3}{\partial \mathcal{R}_1}(S^*)^2 +\frac{\partial c_2}{\partial \mathcal{R}_1}S^*  + \frac{\partial c_1}{\partial \mathcal{R}_1}}P_S\left(S^*;\mathcal{R}_1=\frac{(\gamma_2+\mu)\mathcal{R}_2}{(1+\sigma_{21}(\mathcal{R}_2-1))\gamma_1+\mu},\mathcal{R}_2\right).
\end{equation}
Note that the coefficients~$c_i$ of~\eqref{eq:cubic_eq4S} are linear in~$\mathcal{R}_1$, and therefore their partial derivative with respect to~$\mathcal{R}_1$ are independent of~$\mathcal{R}_1$.  Thus, the relation~\eqref{eq:dr_ds_rel} explicit defines~$d_{r_1}$ in terms of~$\mathcal{R}_2$ and~$S^*$.  
\end{subequations}
Substituting~$\mathcal{R}_1=\mathcal{R}_1(d_{r_1},\mathcal{R}_2)$ into~\eqref{eq:Ii} yields
\begin{equation}\label{eq:simplifiedI1}
I_1^*=\frac{\mu}{\sigma_{12}}\frac{S^*+\sigma_{12}(1-S^*)}{\gamma_1+\mu(1-\mathcal{R}_2 S^*)}(1-\mathcal{R}_2 S^*).
\end{equation}
Therefore,~\eqref{eq:boundsS*} implies~$I_1^*>0$.  Similarly, if~$(\mathcal{R}_1,\mathcal{R}_2)\in \Omega^\mu_1$ then~$I_2^*>0$.

Next, we show that if~$(\mathcal{R}_1,\mathcal{R}_2)\in \Omega^\mu$ then~$J_i^*>0$.
Indeed, substituting~\eqref{eq:simplifiedI1} into~\eqref{eq:Ji} yields
\begin{equation}\label{eq:simplifiedJ1}
J_i^*=\mu\frac{A_i}{(\gamma_i+\mu)\sigma_{ji}\mathcal{R}_i}\frac{1-\mathcal{R}_i S^*}{\gamma_j+\mu(1-\mathcal{R}_i S^*)},\quad j\ne i,
\end{equation}
where
\[
A_i:=\mathcal{R}_i\gamma_j\sigma_{ji}(1 - S^*) - (\gamma_j + \mu)(1-\mathcal{R}_iS^*),\quad j\ne i.
\]
For~$(\mathcal{R}_1,\mathcal{R}_2)\in\Omega^\mu$, the bound~\eqref{eq:boundsS*} implies that $J_i^*>0$ if and only if~$A_i>0$ where~$i=1,2$.
By~\eqref{eq:Ri},
\begin{equation}\label{eq:Ai_positive}
A_i=\mathcal{R}_i\sigma_{ji}[\gamma_j(1-S^*)- (\gamma_j + \mu) R_j^*]>\gamma_j\mathcal{R}_i\sigma_{ji}(1-S^*-R_j^*)>0,
\end{equation}
where the last inequality is due to
\begin{equation}\label{eq:boundp_partial_sum}
\begin{split}
1-(S^*+R_1^*+R_2^*)&=\left(1-\frac1{\mathcal{R}_2\sigma_{12}}-\frac1{\mathcal{R}_2\sigma_{21}}\right)-\left(1-\frac1{\sigma_{12}}-\frac1{\sigma_{21}}\right)S^*>\\&\left(1-\frac1{\sigma_{12}}-\frac1{\sigma_{21}}\right)(1-S^*)>0,
\end{split}    
\end{equation}
where~\eqref{eq:Ri} applies in the first equality.
Moreover, when~$J_i>0$ for~$i=1,2,$ then~$R_3^*=\gamma_1 J_1^*+\gamma_2 J_2^*>0$.

Finally, summation of all equations of~\eqref{eq:model} implies that
\[
S^*+I_1^*+I_2^*+J_1^*+J_2^*+R_1^*+R_2^*+R_3^*=1.
\]
Thus, for~$(\mathcal{R}_1,\mathcal{R}_2)\in \Omega^\mu$, the steady-state solution~$\phi_{CE}$ satisfies~\eqref{eq:boundOnVariables}.
\end{proof}

Theorem~\ref{theo:existenceCE} proves the existence and uniqueness of a steady state with co-existing strains in the parameter regime~$\Omega^\mu$ defined by \eqref{eq:Omega}.  Note that~$\Omega^\mu$ is the regime in which the other steady-states are unstable, see Section~\ref{sec:R0} and Lemma~\ref{lem:singleStrainSteadyState}.  In the subsequent section, we will prove, for sufficiently small~$\mu>0$, the non-existence of a steady-state with coexisting strains outside the parameter regime~$\Omega^\mu$.  Hence, for small enough~$\mu>0$, the system~\eqref{eq:model} is not bi-stable.  We conjecture that this result is valid for all~$\mu>0$.  

The proof of Theorem~\ref{theo:existenceCE} encompasses several findings that will aid the following analysis. The next Lemma serves as a complement to Proposition~\ref{prop:reduceSystem}.
\begin{lemma}\label{lem:simplifiedIiJi}
Under the conditions of Proposition~\ref{prop:reduceSystem},~$I_1^*$ and~$I_2^*$ satisfy
\begin{equation}\label{eq:simplifiedI1_lemma}
I_1^*=\frac{\mu}{\sigma_{12}}\frac{S^*+\sigma_{12}(1-S^*)}{\gamma_1+\mu(1-\mathcal{R}_2 S^*)}(1-\mathcal{R}_2 S^*),\quad I_2^*=
\frac{\mu}{\sigma_{21}}\frac{S^*+\sigma_{21}(1-S^*)}{\gamma_2+\mu(1-\mathcal{R}_2 S^*)}(1-\mathcal{R}_1 S^*),
\end{equation}
and~$J_1^*$ and~$J_2^*$ satisfy
\[
J_i^*=\mu\frac{A_i}{(\gamma_i+\mu)\sigma_{ji}\mathcal{R}_i}\frac{1-\mathcal{R}_i S^*}{\gamma_j+\mu(1-\mathcal{R}_i S^*)},\quad j\ne i,
\]
where
\[
A_1:=\mathcal{R}_1\gamma_2\sigma_{21}(1 - S^*) - (\gamma_2 + \mu)(1-\mathcal{R}_1S^*),\quad 
A_2:=\mathcal{R}_2\gamma_1\sigma_{12}(1 - S^*) - (\gamma_1 + \mu)(1-\mathcal{R}_2S^*).
\]
\end{lemma}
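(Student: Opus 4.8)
The formulas asserted for $I_1^*$ and for $J_1^*,J_2^*$ are \emph{verbatim} the expressions \eqref{eq:simplifiedI1} and \eqref{eq:simplifiedJ1} that already surface inside the proof of Theorem~\ref{theo:existenceCE}; the only genuinely new content is the formula for $I_2^*$. However, in that proof \eqref{eq:simplifiedI1} was obtained through the $\Omega^\mu$-specific parametrization $\mathcal{R}_1=\mathcal{R}_1(d_{r_1},\mathcal{R}_2)$, which presupposes $\mathcal{R}_2>1$. Since Proposition~\ref{prop:reduceSystem} assumes only the \emph{existence} of a coexistence steady state (so that $S^*$ is a root of the cubic \eqref{eq:cubic_eq4S_with_ci}), the plan is to re-derive \eqref{eq:simplifiedI1} directly from \eqref{eq:Ii} using the cubic relation alone, obtain $I_2^*$ by symmetry, and then recover $J_1^*,J_2^*$ by substitution.

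\smallskip
\noindent\textbf{Step 1 (formula for $I_1^*$).} Substitute $R_1^*,R_2^*$ from \eqref{eq:Ri} into the numerator of \eqref{eq:Ii}; it acquires the common factor $(1-\mathcal{R}_2S^*)$, reducing $I_1^*$ to $(1-\mathcal{R}_2S^*)$ times a quadratic over the quadratic denominator $D$ of \eqref{eq:Ii}. The claimed expression, by contrast, has the \emph{linear} denominator $\gamma_1+\mu(1-\mathcal{R}_2S^*)$, so the two cannot coincide as rational functions of a free variable $S^*$ --- indeed they already disagree at $S^*=0$. The reduction is legitimate precisely because $S^*$ solves the cubic. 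Concretely, I would equate \eqref{eq:Ii} with the claimed form, clear denominators, cancel the factor $\mu(1-\mathcal{R}_2S^*)$ (nonzero since otherwise $I_1^*=0$, contradicting coexistence), and verify that the difference of the two resulting cubic polynomials in $S^*$ equals $\sigma_{21}^{-1}P_S(S^*)$. The proportionality constant is pinned down by evaluating at $S^*=0$: there the difference equals $-\sigma_{12}(\gamma_1+\mu)(\gamma_2+\mu)$, while $P_S(0)=c_0=-(\gamma_1+\mu)(\gamma_2+\mu)\sigma_{12}\sigma_{21}$, whose ratio is $\sigma_{21}^{-1}$. Since $P_S(S^*)=0$ at the steady state, the two forms agree, yielding the stated $I_1^*$.

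\smallskip
\noindent\textbf{Step 2 (formula for $I_2^*$).} The system \eqref{eq:model} is invariant under the relabeling $1\leftrightarrow2$, which exchanges $(\gamma_1,\mathcal{R}_1,\sigma_{12},I_1^*,R_1^*)$ with $(\gamma_2,\mathcal{R}_2,\sigma_{21},I_2^*,R_2^*)$ and fixes $S^*$. Applying this symmetry to the formula just established for $I_1^*$ gives the stated expression for $I_2^*$.

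\smallskip
\noindent\textbf{Step 3 (formulas for $J_i^*$).} Insert the simplified $I_2^*$ and $R_2^*$ from \eqref{eq:Ri} into the relation for $J_1^*$ in \eqref{eq:Ji}, and likewise for $J_2^*$; collecting over a common denominator reproduces the stated forms with $A_1,A_2$ as defined, which are exactly \eqref{eq:simplifiedJ1}.

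\smallskip
The only nontrivial point is Step~1: recognizing that the passage from \eqref{eq:Ii} to the simpler form is \emph{not} an algebraic identity in $S^*$ but holds only on the zero set of $P_S$, and then carrying out the (routine but bookkeeping-heavy) coefficient comparison confirming that the discrepancy is exactly $\sigma_{21}^{-1}P_S(S^*)$. Steps~2 and~3 are then immediate.
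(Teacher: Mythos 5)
Your plan is correct in substance and, at its core, rests on the same fact as the paper's proof, which is a one-line back-reference: in the proof of Theorem~\ref{theo:existenceCE}, \eqref{eq:simplifiedI1} is obtained by substituting into \eqref{eq:Ii} the parametrization $\mathcal{R}_1=\mathcal{R}_1(d_{r_1},\mathcal{R}_2)$, whose defining relation is precisely the cubic $P_S(S^*)=0$ in disguise, and \eqref{eq:simplifiedJ1} is obtained exactly as in your Step~3; the closing remark of the paper's proof (``the result also applies for $(\mathcal{R}_1,\mathcal{R}_2)\not\in\Omega^\mu$'') plays the role of your observation that only the cubic relation, not membership in $\Omega^\mu$, is actually used. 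What you do differently is to make this self-contained: you correctly note that \eqref{eq:Ii} and \eqref{eq:simplifiedI1_lemma} are \emph{not} equal as rational functions of $S^*$ (they already differ at $S^*=0$), so the passage between them must be an identity modulo $P_S$, verified by clearing denominators, cancelling $\mu(1-\mathcal{R}_2S^*)$, and comparing the resulting cubics. Your Step~2 symmetry argument is legitimate (the model is invariant under the relabeling $1\leftrightarrow2$, which fixes $S$), and it in fact exposes a typo in the lemma as printed: the denominator of $I_2^*$ should read $\gamma_2+\mu(1-\mathcal{R}_1 S^*)$, not $\gamma_2+\mu(1-\mathcal{R}_2 S^*)$; your symmetric form is the one that, inserted into \eqref{eq:Ji}, reproduces the stated $J_1^*$ with $A_1$ as defined, which is how your Step~3 closes.

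Two caveats on Step~1. First, the decisive identity (difference $=\sigma_{21}^{-1}P_S(S^*)$) is announced rather than executed; the evaluation at $S^*=0$ only pins the constant, and you still need the full coefficient comparison, or the observation that a cubic vanishing at the three (generically simple) roots of $P_S$ must be a scalar multiple of it, extended to all parameters by continuity. Second, if you run that comparison against the formulas as printed, it will not close, because \eqref{eq:Ii} and \eqref{eq:ci} carry sign errors: solving the $2\times2$ linear system for $(I_1^*,I_2^*)$ obtained from \eqref{eq:Ji} and \eqref{eq:IiJi_rel1} gives the determinant $\bigl[\gamma_1\gamma_2\mathcal{R}_1\mathcal{R}_2(S^*)^2-(\gamma_1+\mu)(\gamma_2+\mu)(1-\mathcal{R}_1S^*)(1-\mathcal{R}_2S^*)\bigr]/\bigl(\mathcal{R}_1\mathcal{R}_2(S^*)^2\bigr)$, whose numerator is the \emph{negative} of the printed denominator of \eqref{eq:Ii}; consistently, at $\mu=0$ the printed cubic reduces to $\gamma_1\gamma_2\,(-a s_0^2+b s_0+c)$, which is incompatible with \eqref{eq:quadratic_eq4S_coeffs} --- and the quadratic with coefficients \eqref{eq:quadratic_eq4S_coeffs} is the correct one, since it is exactly what \eqref{eq:b_i_simplified} together with the steady-state relation $1-s_0=\gamma_1 b_1+\gamma_2 b_2$ forces. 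So your method is the right tool, and carrying it out would in fact detect these typos, but the specific constants you quote (the value $-\sigma_{12}(\gamma_1+\mu)(\gamma_2+\mu)$ at $S^*=0$ and hence the factor $\sigma_{21}^{-1}$) hold only relative to the printed, sign-flawed forms and must be adjusted once the signs in \eqref{eq:Ii} and \eqref{eq:ci} are corrected. The lemma's formulas themselves (with the $I_2^*$ denominator fixed as above) are correct.
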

\begin{proof}
    This follows directly from the proof of Theorem~\ref{theo:existenceCE}, see~\eqref{eq:simplifiedI1}.  Note that the result also applies for~$(\mathcal{R}_1,\mathcal{R}_2)\not\in \Omega^\mu$.
\end{proof}

The following lemma highlights useful inequalities.
\begin{lemma}\label{lem:1minuss_0_lower_bound_gnrl_mu}
Under the conditions of Theorem~\ref{theo:existenceCE}. If~$(\mathcal{R}_1,\mathcal{R}_2)\in \Omega^\mu$, then~$S^*$ satisfies 
\begin{equation}\label{eq:1minuss_0_lower_bound_gnrl_m}
1-\mathcal{R}_iS^*>0,\quad A_i>0, \qquad i=1,2.
\end{equation}
where
\[
A_1:=\mathcal{R}_1\gamma_2\sigma_{21}(1 - S^*) - (\gamma_2 + \mu)(1-\mathcal{R}_1S^*),\quad 
A_2:=\mathcal{R}_2\gamma_1\sigma_{12}(1 - S^*) - (\gamma_1 + \mu)(1-\mathcal{R}_2S^*).
\]
\end{lemma}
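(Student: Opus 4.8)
The plan is to recognize that both inequalities were, in effect, already derived inside the proof of Theorem~\ref{theo:existenceCE}, so the task is to extract and restate them rather than to build new machinery. For the bound $1-\mathcal{R}_iS^*>0$, I would simply recall that the proof of Theorem~\ref{theo:existenceCE} establishes, for $(\mathcal{R}_1,\mathcal{R}_2)\in\Omega^\mu$, that the cubic $P_S$ has a unique root $S^*$ satisfying \eqref{eq:boundsS*}, namely $0<S^*<\min_{i=1,2}\mathcal{R}_i^{-1}$; multiplying $S^*<\mathcal{R}_i^{-1}$ by $\mathcal{R}_i>0$ gives the claim for $i=1,2$ at once.

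For $A_i>0$, I would reuse the computation that established positivity of the secondary-infection compartments $J_i^*$ in Theorem~\ref{theo:existenceCE}. There $A_i$ is rewritten through the explicit linear expression for $R_j^*$ in \eqref{eq:Ri}, reducing $A_i>0$ to the statement that the combined susceptible-and-recovered mass satisfies $S^*+R_1^*+R_2^*<1$. This, in turn, is the partial-sum estimate \eqref{eq:boundp_partial_sum}, which rests on \eqref{eq:Ri}, the bound \eqref{eq:boundsS*}, and the parameter constraint $0<\sigma_{ij}\le1$.

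The step that carries the genuine content is the partial-sum bound $1-(S^*+R_1^*+R_2^*)>0$: it is the only place where the constraint $\sigma_{ij}\le1$ and the explicit bound on $S^*$ from \eqref{eq:boundsS*} must be combined, whereas the surrounding manipulations are direct substitution and sign-tracking. Consequently, once \eqref{eq:boundsS*} and \eqref{eq:boundp_partial_sum} are granted from the proof of Theorem~\ref{theo:existenceCE}, the present lemma follows with no additional estimates.
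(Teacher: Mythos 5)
Your proposal is essentially identical to the paper's proof: the paper disposes of this lemma by pointing back to the proof of Theorem~\ref{theo:existenceCE}, extracting \eqref{eq:boundsS*} for $1-\mathcal{R}_iS^*>0$ and the computation \eqref{eq:Ai_positive}, which rests on \eqref{eq:Ri} and the partial-sum estimate \eqref{eq:boundp_partial_sum}, for $A_i>0$ --- exactly the two ingredients you name. One caution you inherit from the paper's own display \eqref{eq:Ai_positive}: the reduction is not literally to $S^*+R_1^*+R_2^*<1$, since substituting \eqref{eq:Ri} gives $A_1=\mathcal{R}_1\sigma_{21}\bigl[\gamma_2(1-S^*-R_2^*)-\mu R_2^*\bigr]$, so positivity needs the slightly stronger bound $\gamma_j(1-S^*-R_j^*)>\mu R_j^*$ rather than the bare partial-sum inequality (the first inequality in \eqref{eq:Ai_positive} as printed actually points the wrong way), so a fully rigorous write-up must account for the $\mu R_j^*$ term rather than drop it.
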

\begin{proof}
The proof of Theorem~\ref{theo:existenceCE} directly shows that~$1-\mathcal{R}_is_0>0$ for~$i=1,2$, see~\eqref{eq:boundsS*}.  It also directly shows that~$A_i>0$, see~\eqref{eq:Ai_positive}.
\end{proof}
\section{Existence and uniqueness of coexistence steady-state for~$\mu\ll\gamma_i$}\label{sec:CE_perturbation}
In what follows, we focus on the asymptotic regime
\begin{equation}\label{eq:asymptotic_regime}
\mu\ll \gamma_i,\quad i=1,2,
\end{equation}
that corresponds to a case in which the birth and mortality rate~$\mu$ is significantly smaller than the recovery rates~$\gamma_1$ and~$\gamma_2$.  Such a separation of time scales arises for example, in Influenza, Varicella, COVID-19, and Rubeola~\cite{CDCyellowbook2024} for which the characteristic infectious period is on a scale of a week compared to a human life span of 70-75 years.   

The following proposition represents the results of Section~\ref{sec:phiCE} in the asymptotic regime~\eqref{eq:asymptotic_regime}, and supplements to them.
\begin{proposition}[Steady-state of coexistence]\label{prop:CE_small_mu}
    Let~$\{\beta_i,\gamma_i\}_{i=1}^2$,~$\sigma_{12}$ and~$\sigma_{21}$ be parameters that satisfy~\eqref{eq:parameterspace}, and let~$\mathcal{R}_i$ be defined by~\eqref{eq:def_Ri}.
    Then, there exists a sufficient small~$\mu^*<\min\{\gamma_1,\gamma_2\}$ such that for all~$0<\mu<\mu^*$
\begin{itemize}
    \item {\bf Existence and uniqueness of a feasible steady-state solution of coexistence:}
    If~$(\mathcal{R}_1,\mathcal{R}_2)\in \Omega^{\mu}$, then there exists a unique steady-state solution of~\eqref{eq:model} satisfying \begin{enumerate}
    \item Feasibility: Satisfies~\eqref{eq:boundOnVariables}.
    \item Coexistence:~$I_1^*>0$ and~$I_2^*>0$.
    \item Approximate values: The steady-state solution satisfies
    \begin{subequations}\label{eq:approxS*_andIi*}
\begin{equation}
S^*=s_0+\mathcal{O}(\mu),\quad I_i^*=\mu b_i+\mathcal{O}(\mu^2),\quad i=1,2,
\end{equation}
where~$s_0$ is a solution of the quadratic equation
\begin{equation}\label{eq:quadratic_eq4S}
p_s(s_0)=as_0^2+bs_0+c=0,
\end{equation}
with 
\begin{equation}\label{eq:quadratic_eq4S_coeffs}
a=\sigma_{12}(1-\sigma_{21})\mathcal{R}_1+\sigma_{21}(1-\sigma_{12})\mathcal{R}_2,\quad b=\sigma_{12}\sigma_{21}\left(\mathcal{R}_1+\mathcal{R}_2\right)-s,\quad c=-\sigma_{12}\sigma_{21},
\end{equation}
\[
s:=\sigma_{12}+\sigma_{21}-\sigma_{12}\sigma_{21}=1-(1-\sigma_{12})(1-\sigma_{21}),
\]
and~$b_1$ and~$b_2$ satisfy
\begin{equation}\label{eq:b_i_simplified}
b_1=\frac{s_0+\sigma_{12}(1-s_0)}{\gamma_1\sigma_{12}}(1-\mathcal{R}_2s_0),\qquad
b_2=\frac{s_0+\sigma_{21}(1-s_0)}{\gamma_2\sigma_{21}}(1-\mathcal{R}_1s_0).
\end{equation}
\end{subequations}
    \end{enumerate}
    \item {\bf Non-existence:}
    If~$(\mathcal{R}_1,\mathcal{R}_2)\not\in \Omega^\mu$, then a steady-state solution of~\eqref{eq:model} with~$I_1^*>0$ and~$I_2^*>0$ does not exist.
    \end{itemize}
\end{proposition}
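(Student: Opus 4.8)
The plan is to separate the statement into the two regimes and isolate the genuinely new content. For $(\mathcal{R}_1,\mathcal{R}_2)\in\Omega^\mu$, the existence, uniqueness, feasibility and coexistence of $\phi_{CE}$ already hold verbatim from Theorem~\ref{theo:existenceCE}, which is valid for every $\mu>0$. Hence the only new work in the first bullet is the asymptotic evaluation \eqref{eq:approxS*_andIi*}, while the whole second bullet (non-existence outside $\Omega^\mu$) is new and is where the hypothesis $\mu<\mu^*$ is really needed. The engine throughout is the cubic $P_S$ of Proposition~\ref{prop:reduceSystem} together with the closed-form expressions for $I_i^*$ in Lemma~\ref{lem:simplifiedIiJi}.

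For the approximation I would avoid expanding the cubic directly and instead use the exact scalar identity obtained at steady state by summing \eqref{eq:eqS} with \eqref{eq:eqI1} and \eqref{eq:eqI2}, namely $\mu(1-S^*)=(\gamma_1+\mu)I_1^*+(\gamma_2+\mu)I_2^*$. Lemma~\ref{lem:simplifiedIiJi} shows each $I_i^*$ carries an explicit prefactor $\mu$ and a denominator $\gamma_i+\mu(1-\mathcal{R}_jS^*)\to\gamma_i$, so $I_i^*=O(\mu)$ once $S^*$ is known to be bounded. Dividing the identity by $\mu$ and letting $\mu\to0$ yields $1-s_0=\gamma_1b_1(s_0)+\gamma_2b_2(s_0)$ with $b_i(s_0)$ as in \eqref{eq:b_i_simplified}; substituting \eqref{eq:b_i_simplified} and clearing $\sigma_{12}\sigma_{21}$ collapses this to the quadratic $p_s(s_0)=as_0^2+bs_0+c=0$ of \eqref{eq:quadratic_eq4S}. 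Since $a>0$ and $c=-\sigma_{12}\sigma_{21}<0$, the discriminant $b^2-4ac>0$ and $p_s$ has exactly one simple positive root; evaluating $p_s$ at $1/\mathcal{R}_i$ (the $\mu\to0$ analogue of \eqref{eq:Ps_at_1Ri}) shows this root lies in $(0,\min_i 1/\mathcal{R}_i)$ precisely when $(\mathcal{R}_1,\mathcal{R}_2)\in\Omega^0:=\lim_{\mu\to0}\Omega^\mu$. Simplicity of $s_0$ then lets the implicit function theorem promote the limit to $S^*=s_0+O(\mu)$, and feeding this back into \eqref{eq:simplifiedI1_lemma} gives $I_i^*=\mu b_i+O(\mu^2)$.

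For non-existence I would argue through the sign of $I_i^*$. For $0<S^*<1$ and $\mu$ small the prefactor and denominator in \eqref{eq:simplifiedI1_lemma} are positive, so $\operatorname{sign}I_1^*=\operatorname{sign}(1-\mathcal{R}_2S^*)$ and $\operatorname{sign}I_2^*=\operatorname{sign}(1-\mathcal{R}_1S^*)$; hence any coexistence state must have $0<S^*<\min_i1/\mathcal{R}_i$, and conversely a coexistence state exists iff $P_S$ has a root there. Letting $m$ be the index with $\mathcal{R}_m=\max\{\mathcal{R}_1,\mathcal{R}_2\}$ and writing $T_i$ for the right-hand threshold in \eqref{eq:omega_i}, one checks from $1<T_i<\mathcal{R}_i$ (for $\mathcal{R}_i>1$) that $\Omega^\mu=\Omega^\mu_m$, so the only relevant endpoint is $1/\mathcal{R}_m$, where $P_S(0)=c_0<0$ and, by \eqref{eq:Ps_at_1Ri}, $P_S(1/\mathcal{R}_m)>0\iff(\mathcal{R}_1,\mathcal{R}_2)\in\Omega^\mu_m=\Omega^\mu$. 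For $\mu<\mu^*$ the cubic is a small perturbation of $p_s$, whose single positive root forces $P_S$ to have exactly one root in $(0,\infty)$ below the escaping (large) root; therefore $P_S$ has a root in $(0,1/\mathcal{R}_m)$ iff $P_S(1/\mathcal{R}_m)>0$. Consequently, if $(\mathcal{R}_1,\mathcal{R}_2)\notin\Omega^\mu$ then $P_S(1/\mathcal{R}_m)\le0$, the unique positive root lies at or beyond $1/\mathcal{R}_m$, and the corresponding $I^*$ is non-positive, ruling out coexistence.

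The main obstacle is uniformity and the root count as $\mu\to0$. I expect the delicate points to be: (i) showing that the third root of $P_S$ genuinely escapes, so that on the bounded window $(0,1/\mathcal{R}_m)$ the cubic inherits the single-sign-change behaviour of $p_s$, which requires tracking $c_3=O(\mu)$ against the $O(1)$ coefficients $c_0,c_1,c_2$; (ii) choosing $\mu^*$ uniformly, at least on compact subsets of parameter space, so that the $O(\mu)$ perturbation estimates and all sign determinations hold simultaneously; and (iii) matching the $\mu$-dependent threshold defining $\Omega^\mu_i$ with the exact sign of $P_S(1/\mathcal{R}_i)$, so that the dividing set is exactly $\Omega^\mu$ and not merely its limit $\Omega^0$. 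Because \eqref{eq:Ps_at_1Ri} is an identity valid for all $\mu>0$, point (iii) is clean; the real work is (i)--(ii), namely confirming that the root count is stable under the singular degeneration $c_3\to0$ of the cubic to the quadratic $p_s$.
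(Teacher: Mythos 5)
Your proposal is correct, and its skeleton coincides with the paper's (terse) proof: the paper (a) cites Theorem~\ref{theo:existenceCE} verbatim for existence, uniqueness, feasibility and coexistence on~$\Omega^\mu$, exactly as you do; (b) obtains the quadratic~\eqref{eq:quadratic_eq4S} by substituting~$\mu=0$ into the cubic~\eqref{eq:cubic_eq4S_with_ci} and reads off~$b_i$ as the coefficient of~$\mu$ in~\eqref{eq:simplifiedI1_lemma}; and (c) proves non-existence by your endpoint-sign argument: outside~$\Omega^\mu_i$ one has~$p_s(1/\mathcal{R}_i)<0$ (the~$\mu=0$ analogue of~\eqref{eq:Ps_at_1Ri}), $\lim_{s\to\infty}p_s(s)=\infty$, and since~$ac<0$ the unique positive root of~$p_s$ lies beyond~$1/\mathcal{R}_i$, forcing~$b_i<0$ and hence~$I_i^*<0$ for small~$\mu$ via~\eqref{eq:b_i_simplified}. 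Where you genuinely diverge is the derivation of the quadratic: instead of degenerating the cubic, you sum the steady-state versions of (\ref{eq:eqS}--\ref{eq:eqI2}) to get the exact identity~$\mu(1-S^*)=(\gamma_1+\mu)I_1^*+(\gamma_2+\mu)I_2^*$, divide by~$\mu$, and pass to the limit using Lemma~\ref{lem:simplifiedIiJi}; I verified that~$1-s_0=\gamma_1 b_1(s_0)+\gamma_2 b_2(s_0)$ does collapse, after clearing~$\sigma_{12}\sigma_{21}$, to~$as_0^2+bs_0+c=0$ with precisely the coefficients~\eqref{eq:quadratic_eq4S_coeffs} (a symmetric test case~$\sigma_{12}=\sigma_{21}=\tfrac12$, $\mathcal{R}_1=\mathcal{R}_2=2$ confirms the root). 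This route buys two things: it makes structurally transparent why the~$b_i$ of~\eqref{eq:b_i_simplified} are built into the quadratic, and it sidesteps the long printed coefficients~\eqref{eq:ci} entirely --- a real advantage, since the~$\mu=0$ reduction of~\eqref{eq:ci} as printed gives quadratic coefficient~$-\gamma_1\gamma_2 a$ against constant term~$\gamma_1\gamma_2 c$, i.e.\ an apparent sign slip in~$c_2$ that is inconsistent with~\eqref{eq:quadratic_eq4S_coeffs} and with the sign structure of~\eqref{eq:Ps_at_1Ri}, to which your derivation is immune. Your additional bookkeeping --- simplicity of~$s_0$ plus the implicit function theorem to upgrade the limit to~$S^*=s_0+\mathcal{O}(\mu)$, tracking the escaping third root of the cubic under~$c_3=\mathcal{O}(\mu)$, the clean observation~$\Omega^\mu=\Omega^\mu_m$ (which holds because the threshold~$T_i$ in~\eqref{eq:omega_i} satisfies~$1<T_i<\mathcal{R}_i$ for~$\mathcal{R}_i>1$), and the caveat that near~$\partial\Omega^\mu$ the sign test only matches~$p_s$ up to an~$\mathcal{O}(\mu)$ boundary layer --- is all left implicit in the paper, which argues directly at the level of the limiting quadratic; your concerns (i)--(iii) are genuine but are resolved exactly as you sketch, so nothing in your plan fails.
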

\begin{proof}
By Theorem~\ref{theo:existenceCE}, if~$(\mathcal{R}_1,\mathcal{R}_2)\in \Omega^\mu$ then there exists a unique feasible steady-state solution of~\eqref{eq:model} with strictly positive compartment sizes. 
To derive the leading order approximation~\eqref{eq:approxS*_andIi*} of~$S^*$ and~$I_i^*$, we substitute~$\mu=0$ in~\eqref{eq:cubic_eq4S_with_ci} and in the coefficient of~$\mu$ in~\eqref{eq:simplifiedI1_lemma}.


Next, we prove the nonexistence of a feasible steady-state solution of~\eqref{eq:model} with coexisting strains when~$(\mathcal{R}_1,\mathcal{R}_2)\not\in\Omega^\mu$.  To do so, we show that if~$(\mathcal{R}_1,\mathcal{R}_2)\not\in\Omega^\mu$ then~$s_0>\max_{i=1,2} \mathcal{R}_i$.  Thus, by~\eqref{eq:b_i_simplified}, for sufficiently small~$\mu$,~$I_1^*<0$ or~$I_2^*<0$.
Indeed, if~$(\mathcal{R}_1,\mathcal{R}_2)\not\in\Omega^\mu_i$ then
\[
p_s(1/{\mathcal{R}_i})<0,
\]
see~\eqref{eq:Ps_at_1Ri}.
Additionally~$\lim_{s\to\infty}p_s(s)=\infty$.  Therefore, in this case, $p_s$ has a root satisfying~$s_0>\max_{i=1,2} \mathcal{R}_i$.  Since~$ac<0$, see~\eqref{eq:quadratic_eq4S_coeffs}, this root is the only positive root of~$p_s$.

\end{proof}
Studying the systems' behavior in an asymptotic regime enables us to use approximation methods that open the way to analysis.
For example, while in the general case,~$S^*$ satisfies a cubic equation, see Proposition~\ref{prop:reduceSystem}, in the asymptotic regime~\eqref{eq:asymptotic_regime},~$S*$ can be approximated by a solution of a simpler quadratic equation.  Using such approximations, Proposition~\ref{prop:CE_small_mu} supplements Theorem~\ref{theo:existenceCE} by proving that a steady state of coexistence does not exist if~$(\mathcal{R}_1,\mathcal{R}_2)\not\in \Omega^\mu$.
This implies that for sufficiently small~$\mu$,~$\Omega^\mu$ is the maximal subset of the~$(\mathcal{R}_1,\mathcal{R}_2)$ plane in which a feasible steady state with coexisting strains exists.
Since the other steady-states are stable only outside of~$\Omega^\mu$, this result also implies that, for sufficiently small~$\mu$, the system~\eqref{eq:model} does not give rise to bi-stability.
\begin{lemma}[No bi-stability]\label{lem:no_bistability}
    Let~$\{\beta_i,\gamma_i,\}_{i=1}^2$,~$\sigma_{12}$ and~$\sigma_{21}$ be parameters that satisfy~\eqref{eq:parameterspace}.  
    If~$\phi^{EE}_1$ or~$\phi^{EE}_2$ are linearly stable for all~$\mu>0$, then, for small enough~$\mu<\min\{\gamma_1,\gamma_2\}$, a steady-state~$\phi_{CE}$ of coexistence does not exist.
\end{lemma}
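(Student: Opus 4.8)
The plan is to leverage the connection, established throughout Section~\ref{sec:phiCE} and Section~\ref{sec:CE_perturbation}, between the stability of the single-strain states~$\phi^{EE}_i$ and the existence of~$\phi_{CE}$. The key observation is that Lemma~\ref{lem:singleStrainSteadyState}(2) gives the exact stability condition for~$\phi^{EE}_i$, namely~$\mathcal{R}_j<\frac{(\gamma_i+\mu)\mathcal{R}_i}{(1+\sigma_{ij}(\mathcal{R}_i-1))\gamma_i+\mu}$, while the sets~$\Omega^\mu_i$ in~\eqref{eq:omega_i} are defined by the \emph{reversed} inequality together with~$\mathcal{R}_i>1$. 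Thus the stability threshold for~$\phi^{EE}_i$ is exactly the boundary of the region~$\Omega^\mu_i$. Proposition~\ref{prop:CE_small_mu} then tells us that, for small~$\mu$, a coexistence steady state exists if and only if~$(\mathcal{R}_1,\mathcal{R}_2)\in\Omega^\mu$. Combining these, the nonexistence of~$\phi_{CE}$ should follow precisely when one of the single-strain states is stable.

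\textbf{First I would} translate the hypothesis into a statement about~$\Omega^\mu$. Suppose~$\phi^{EE}_1$ is linearly stable for all~$\mu>0$ (the case of~$\phi^{EE}_2$ being symmetric). For~$\phi^{EE}_1$ to exist and be linearly stable, Lemma~\ref{lem:singleStrainSteadyState} requires~$\mathcal{R}_1>1$ and the stability inequality~$\mathcal{R}_2<\frac{(\gamma_1+\mu)\mathcal{R}_1}{(1+\sigma_{12}(\mathcal{R}_1-1))\gamma_1+\mu}$. This is exactly the negation of the second defining inequality of~$\Omega^\mu_1$, so that~$(\mathcal{R}_1,\mathcal{R}_2)\notin\Omega^\mu_1$, and hence~$(\mathcal{R}_1,\mathcal{R}_2)\notin\Omega^\mu=\Omega^\mu_1\cap\Omega^\mu_2$.

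\textbf{Next I would} invoke the non-existence half of Proposition~\ref{prop:CE_small_mu}: there exists~$\mu^*<\min\{\gamma_1,\gamma_2\}$ such that for all~$0<\mu<\mu^*$, if~$(\mathcal{R}_1,\mathcal{R}_2)\notin\Omega^\mu$ then no steady state with~$I_1^*>0$ and~$I_2^*>0$ exists. Since we have just shown that linear stability of~$\phi^{EE}_1$ (or~$\phi^{EE}_2$) forces~$(\mathcal{R}_1,\mathcal{R}_2)\notin\Omega^\mu$, the coexistence state~$\phi_{CE}$ cannot exist for small enough~$\mu$, which is the claim.

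\textbf{The main subtlety} I expect is a bookkeeping point about how~$\Omega^\mu$ depends on~$\mu$: the stability threshold in Lemma~\ref{lem:singleStrainSteadyState} and the defining inequalities of~$\Omega^\mu_i$ both carry explicit~$\mu$-dependence, so I must be careful that the hypothesis ``stable for all~$\mu>0$'' is applied at the \emph{same}~$\mu$ at which I invoke Proposition~\ref{prop:CE_small_mu}, not at a fixed reference value. Because the two inequalities are exact negations of one another at every fixed~$\mu$, this matching is clean: for each~$\mu\in(0,\mu^*)$, stability of~$\phi^{EE}_i$ at that~$\mu$ places~$(\mathcal{R}_1,\mathcal{R}_2)$ outside~$\Omega^\mu$ at that same~$\mu$, and nonexistence follows. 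No genuine obstacle remains beyond ensuring this consistency; the proof is essentially a direct composition of Lemma~\ref{lem:singleStrainSteadyState} with Proposition~\ref{prop:CE_small_mu}.
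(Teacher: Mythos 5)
Your proposal is correct and takes essentially the same route as the paper: the paper obtains Lemma~\ref{lem:no_bistability} exactly as you do, by observing that linear stability of~$\phi^{EE}_i$ (Lemma~\ref{lem:singleStrainSteadyState}) is the negation of the defining inequality of~$\Omega^\mu_i$, so that~$(\mathcal{R}_1,\mathcal{R}_2)\notin\Omega^\mu$, and then invoking the non-existence half of Proposition~\ref{prop:CE_small_mu}. Your bookkeeping point about applying the hypothesis at the same~$\mu$ at which the proposition is invoked (since both~$\mathcal{R}_i=\beta_i/(\gamma_i+\mu)$ and~$\Omega^\mu$ move with~$\mu$) is precisely the detail the paper leaves implicit, and you resolve it correctly.
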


\section{Stability of the coexistence steady-state}\label{sec:stability_CE}
Proposition~\ref{prop:CE_small_mu} proves
the existence and uniqueness of a coexistence steady-state~$\phi_{CE}$ in the parameter regime~$\Omega^\mu$ for sufficiently small~$\mu$.  We now consider the stability of~$\phi_{CE}$.  
One can readily determine the stability of~$\phi_{CE}$ by computing the Jacobian of~\eqref{eq:model} at~$\phi_{CE}$, or at a sufficiently accurate approximation of it in the regime~\eqref{eq:asymptotic_regime}, and studying its eigenvalues. However, the resulting Jacobian is a $7\times7$ matrix.  Since the characteristic polynomial is a septic equation, it cannot be solved analytically except in special cases.  Bypassing this problem by applying the Routh-Horowitz criteria to the corresponding characteristic polynomial of degree seven is a notorious, almost impractical, task.  Instead, in what follows, we continue to exploit the separation of time scales,~\eqref{eq:asymptotic_regime}, and derive an approximation of the roots of the characteristic polynomial at the minimal accuracy required to determine the stability of~$\phi_{CE}$.

The characteristic polynomial of the Jacobian of~\eqref{eq:model} at~$\phi_{CE}$ takes the form
\begin{equation}\label{eq:Plambda}
P(\lambda)=\lambda^5(\lambda+\gamma_1)(\lambda+\gamma_2)+\mu\lambda^3 \sum_{j=0}^4\lambda^jP_{1j}+\mu^2 \lambda \sum_{j=0}^6\lambda^jP_{2j}+\mu^3\sum_{j=0}^7\lambda^jP_{3j}+\mathcal{O}(\mu^4),\qquad P_{i0}\ne0,\quad i=1,2,3,
\end{equation}
where the coefficients,~$P_{ij}=P_{ij}(\mathcal{R}_1,\mathcal{R}_2;\gamma_1,\gamma_2,\sigma_{12},\sigma_{21})$,  depend on the problem parameters. The explicit expressions for the coefficients,~$P_{ij}$, are available via symbolic computations and are generally long and cumbersome.  For conciseness, in what follows, we will provide explicit expressions only by context and for the coefficients whose explicit form is used in the stability analysis.  

The polynomial~$P(\lambda)$ has seven roots~$\{\lambda_j\}_{j=1}^7$.  Two roots satisfy
\begin{subequations}\label{eq:lambda_123}
    \begin{equation}
        \lambda_j=-\gamma_j+\mathcal{O}(\mu),\quad j=1,2,
    \end{equation}
and, thus, are negative to leading order.
An additional root satisfies
    \begin{equation}
        \lambda_3=-\frac{P_{30}}{P_{20}}\mu+\mathcal{O}(\mu^2)=-\frac{2as_0+b}{as_0-c}\mu+\mathcal{O}(\mu^2),
    \end{equation}
where~$P_{20}$ and~$P_{30}$ are coefficients in~$P(\lambda)$, see~\eqref{eq:Plambda}, and their ratio is expressed in terms of the coefficients,~$a,b,c$, of~\eqref{eq:quadratic_eq4S} which are given by~\eqref{eq:quadratic_eq4S_coeffs}.
To the leading order, this root is negative, since~$c<0$ and since $s_0>-b/2a$.
\end{subequations}
Dominant balance (see, e.g.,~\cite{holmes2012introduction}) shows that the four additional roots of~$P(\lambda)$ are of~$\mathcal{O}(\sqrt\mu)$, 
\begin{equation}\label{eq:lambda_47}
    \lambda_j=\sqrt{\mu} \,r_j+o(\sqrt\mu),\quad j=4,5,\cdots,7.
\end{equation}
Substituting expansion~\eqref{eq:lambda_47} into~$P(\lambda)$ yields
\begin{equation}\label{eq:Plambda_residual}
P(\sqrt{\mu} \,r_j)=(\gamma_1\gamma_2 r_j^4+P_{10}r_j^2+P_{20})r_j\mu\sqrt\mu+\mathcal{O}(\mu^2).
\end{equation}
Thus,~$r_j$ is a solution of
\begin{equation}\label{eq:eq4ri}
\gamma_1\gamma_2 r_j^4+P_{10}r_j^2+P_{20}=0,
\end{equation}
where the coefficients~$P_{10}$ and~$P_{20}$ of~$P(\lambda)$, see~\eqref{eq:Plambda}, are given by
\begin{subequations}\label{eq:P10_P20}
\begin{equation}
P_{10}=B_1+B_2,\quad P_{20}=\frac{B_1B_2}{\gamma_1\gamma_2}\left[1-\frac{(1-\sigma_{12})(1-\sigma_{21})\mathcal{R}_1\mathcal{R}_2s_0^2}{(\sigma_{12}+(1-\sigma_{12})s_0\mathcal{R}_2)(\sigma_{21}+(1-\sigma_{21})s_0\mathcal{R}_1)}\right],
\end{equation}
where
\begin{equation}\label{eq:Bi}\begin{split}
B_1:=&\frac{\gamma_1\gamma_2}{\sigma_{21}}[\mathcal{R}_1s_0+(1-\mathcal{R}_1s_0)\sigma_{21}]A_1^0,\qquad
B_2:=\frac{\gamma_2\gamma_1}{\sigma_{12}}[\mathcal{R}_2s_0+(1-\mathcal{R}_2s_0)\sigma_{12}]A_2^0,
\end{split}
\end{equation}
and where~$A_i^0=A_i(\mu=0)$, see Lemma~\ref{lem:1minuss_0_lower_bound_gnrl_mu},
\[
A_1^0:=\gamma_1[\sigma_{21}\mathcal{R}_1(1-s_0)-(1-\mathcal{R}_1s_0)],\quad 
A_2^0:=\gamma_2[\sigma_{12}\mathcal{R}_2(1-s_0)-(1-\mathcal{R}_2s_0)].
\]
\end{subequations}
Lemma~\ref{lem:1minuss_0_lower_bound_gnrl_mu} implies that~$B_j>0$,~$j=1,2$, see~\eqref{eq:1minuss_0_lower_bound_gnrl_m}.  Hence,~$P_{10}>0$ and~$P_{20}>0$.  Additionally, the discriminant,~$\Delta$, of~\eqref{eq:eq4ri} satisfies 
\begin{equation}\label{eq:Delta}
\Delta:=P_{10}^2-4\gamma_1\gamma_2 P_{20}=(B_1-B_2)^2+
\frac{4(1-\sigma_{12})(1-\sigma_{21})\mathcal{R}_1\mathcal{R}_2B_1B_2s_0^2}{[\sigma_{12}+(1-\sigma_{12})s_0\mathcal{R}_2][\sigma_{21}+(1-\sigma_{21})s_0\mathcal{R}_1]}\ge0,
\end{equation}
where the inequality is due to~\eqref{eq:parameterspace}, and since~$B_j>0$ for~$j=1,2$. 
Therefore,~$r_j^2=r_\pm^2$ where
\[
r_\pm^2=\frac{-P_{10}\pm \sqrt{\Delta}}{2\gamma_1\gamma_2}<0.
\]
This implies that~$\{r_j\}_{j=4}^7$ are purely imaginary numbers. 
Hence, the stability of~$\phi_{CE}$ is determined by the correction terms of~$\{\lambda_j\}_{j=4}^7$.  
Since the relative residual is~$\mathcal{O}(\sqrt\mu)$, 
\[
\frac{\left|P(\sqrt{\mu} \,r_j)-(\gamma_1\gamma_2 r_j^4+P_{10}r_j^2+P_{20})r_j\mu\sqrt\mu\right|}{|P(\sqrt{\mu} \,r_j)|}=\mathcal{O}(\sqrt\mu),\quad j=4,5,\cdots,7,
\]
see~\eqref{eq:Plambda_residual},
it is reasonable to consider an expansion of~$\lambda_j$ in powers of $\sqrt\mu$ 
\begin{equation}\label{eq:expansion_with_qi}
\lambda_j=\sqrt{\mu} \,r_j+\mu q_j+\mathcal{O}(\mu\sqrt\mu),\quad j=4,5,\cdots,7.
\end{equation}
Substituting~\eqref{eq:expansion_with_qi} into~$P(\lambda)$ and using~$r_j^2=r_\pm^2$ yields that~$q_j=q_\pm$ where
\begin{subequations}\label{eq:qi}
\begin{equation}
q_\pm(\mathcal{R}_1,\mathcal{R}_2;\gamma_1,\gamma_2,\sigma_{12},\sigma_{21})=\mp\frac1{\sqrt{\Delta}}\frac{\pm (\gamma_1+\gamma_2)\Delta^{3/2}+\Delta c^q_2 \pm \sqrt\Delta c^q_1+c^q_0}{16\gamma_1^3\gamma_2^3 r_\pm^2},
\end{equation}
where~$\Delta$ is given by~\eqref{eq:Delta}, and
\begin{equation}\label{eq:cq}
\begin{split}
c^q_2&=2\gamma_1\gamma_2P_{11}-3(\gamma_1+\gamma_2)P_{10},\\
c^q_1&=4\gamma_1^2\gamma_2^2P_{21}-4\gamma_1\gamma_2P_{10}P_{11}+3(\gamma_1+\gamma_2)P_{10}^2,\\
c^q_0&=8\gamma_1^3\gamma_2^3P_{30} - 4\gamma_1^2\gamma_2^2P_{10}P_{21} + 2\gamma_1\gamma_2 P_{10}^2P_{11} -(\gamma_1+\gamma_2)P_{10}^3.
\end{split}
\end{equation}
\end{subequations}
We first note that~$q_j$ may not be defined when~$\Delta=0$.  This is a key point that, subsequently, we will consider systematically.  In the case~$\Delta\ne 0$, then by~\eqref{eq:Delta},~$\Delta>0$, implying that $q_j$ is defined and is it real-valued.
Yet, despite the explicit expression~\eqref{eq:qi} for~$q_\pm$, determining the sign of~$q_\pm$ in the feasible region~$(\mathcal{R}_1,\mathcal{R}_2)\in\Omega^\mu$ of the parameter space requires a notorious study of the cumbersome expression. This effort can be compared to the effort involved in applying the Routh-Horowitz criteria to the septic polynomial~$P(\lambda)$, see~\eqref{eq:Plambda}.  In certain cases,~$P(\lambda)$ reduces to a form that can be analyzed, e.g., due to symmetries.  Such cases are considered in~\cite{chung2016dynamics}, and they all point to linear stability of~$\phi_{CE}$.   Our numerical results, presented subsequently, suggest that~$q_j<0$ in a wide range of scenarios, leading to the conjecture that for a sufficiently small~$\mu<\min_{i=1,2}\gamma_i$, the steady-state~$\phi_{CE}$ is linearly stable when~$\Delta$ does not vanish.  

We now consider the case~$\Delta=0$.  In this case, expansion~\eqref{eq:expansion_with_qi} is invalid.
Let us consider the generalized expansion
\begin{equation}\label{eq:expansion_with_si}
\lambda_j=\sqrt{\mu} \,r_j +\mu^{3/4} s_j+\mu q_j+\mathcal{O}(\mu\sqrt[4]\mu),\quad j=4,5,\cdots,7.
\end{equation}
Substituting~\eqref{eq:expansion_with_si} in~$P(\lambda)$ and equating by powers of~$\mu$, yields at leading order, Equation~\eqref{eq:eq4ri} for~$r_j$, and at the next order
\begin{equation}\label{eq:sqrtDeltasj}
\sqrt{\Delta}r_j^2s_j=0.
\end{equation}
Therefore, as expected, when~$\Delta>0$,~$s_j=0$, and expansion~\eqref{eq:expansion_with_si} reduces to~\eqref{eq:expansion_with_qi}.  
When~$\Delta=0$, the next order yields
    \begin{equation}\label{eq:si}
    s_j^2=-\frac{c^q_0}{16\gamma_1^3\gamma_2^3 P_{10}}\frac1{r_j},
    \end{equation}
    where~$c^q_0$ is given by~\eqref{eq:cq} and~$P_{10}$ is given by~\eqref{eq:P10_P20}.
Note that~$s_j^2$ is proportional to~$1/r_j$, which is purely imaginary, while~$c^q_0$ and~$P_{10}$ are real functions.  Therefore,~$s_j^2$ is purely imaginary.
Consequently, 
\[
s_j=\sqrt{\frac{|c^q_0|}{32\gamma_1^3\gamma_2^3 P_{10} |r_j|}}(\pm 1+i).
\]
Thus, as long as~$c^q_0\ne0$, two of the eigenvalues~$\{\lambda_j\}_{j=4}^7$ have a positive real part.

The above analysis reveals a mechanism of instability that occurs when i)~$\Delta=0$ and ii)~$c^q_0\ne0$.  We now ask under what conditions (i) and (ii) occur simultaneously.
By~\eqref{eq:Bi} and~\eqref{eq:Delta},~$\Delta=0$  if and only if~$B_1=B_2$ and
\[
\frac{4(1-\sigma_{12})(1-\sigma_{21})\mathcal{R}_1\mathcal{R}_2B_1B_2s_0^2}{[\sigma_{12}+(1-\sigma_{12})s_0\mathcal{R}_2][\sigma_{21}+(1-\sigma_{21})s_0\mathcal{R}_1]}=0.
\]
Since~$B_i>0$, and under the assumption~$\sigma_{21}\le \sigma_{12}$, see~\eqref{eq:parameterspace}, it follows that the latter condition requires~$\sigma_{12}=1$.
The following Lemma determines when~$B_1=B_2$ holds in this case,
\begin{lemma}[Characterization of~$\Gamma^*$]\label{lem:B1equalsB2}
If~$\sigma_{12}=1$,
then~$B_1=B_2$ if and only if~$(\mathcal{R}_1,\mathcal{R}_2)\in \Gamma^*$
where~$\Gamma^*$ is the set of points~$(\mathcal{R}_1,\mathcal{R}_2)$ satisfying 
\begin{equation}\label{eq:R2star}
\mathcal{R}_1>0,\quad \mathcal{R}_2=1+\frac{\gamma_1}{\gamma_2\sigma_{21}}\left(\mathcal{R}_1s_0+(1-\mathcal{R}_1s_0)\sigma_{21}\right)\left(\sigma_{21}\mathcal{R}_1(1-s_0)-(1-\mathcal{R}_1s_0)\right).
\end{equation}
Moreover, the curve~$\Gamma^*\subset\Omega^0$
where~$\Omega^0=\Omega^{\mu=0}$, see~\eqref{eq:Omega}.
\end{lemma}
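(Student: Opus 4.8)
The plan is to handle the two assertions of the lemma separately: the equivalence $B_1=B_2\iff\eqref{eq:R2star}$ is a direct algebraic collapse at $\sigma_{12}=1$, while the inclusion $\Gamma^*\subset\Omega^0$ is obtained by translating membership in $\Omega^0$ into the location of the positive root $s_0$ of the quadratic~\eqref{eq:quadratic_eq4S}. For the equivalence I set $\sigma_{12}=1$ directly in~\eqref{eq:Bi}. Two simplifications occur: the prefactor $\mathcal{R}_2 s_0+(1-\mathcal{R}_2 s_0)\sigma_{12}$ in $B_2$ becomes $1$, and $A_2^0=\gamma_2[\sigma_{12}\mathcal{R}_2(1-s_0)-(1-\mathcal{R}_2 s_0)]$ telescopes to $\gamma_2(\mathcal{R}_2-1)$, so that $B_2=\gamma_1\gamma_2^2(\mathcal{R}_2-1)$. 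Leaving $B_1$ as is, the equation $B_1=B_2$, after dividing by $\gamma_1\gamma_2$, is precisely~\eqref{eq:R2star}, since the factors $\mathcal{R}_1 s_0+(1-\mathcal{R}_1 s_0)\sigma_{21}$ and $A_1^0/\gamma_1=\sigma_{21}\mathcal{R}_1(1-s_0)-(1-\mathcal{R}_1 s_0)$ appear verbatim. All steps are reversible, giving both directions.

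For the inclusion I first record the shape of $\Omega^0$ at $\sigma_{12}=1$: the threshold in $\Omega^0_1$ degenerates, $\mathcal{R}_1/(1+\sigma_{12}(\mathcal{R}_1-1))=1$, so that $\Omega^0=\{\mathcal{R}_1>1,\ \mathcal{R}_2>1,\ \mathcal{R}_1(1+\sigma_{21}(\mathcal{R}_2-1))>\mathcal{R}_2\}$. The central tool is to evaluate $p_s$ at $1/\mathcal{R}_1,\,1/\mathcal{R}_2,\,1$; using the coefficients~\eqref{eq:quadratic_eq4S_coeffs} one gets the clean identities $p_s(1/\mathcal{R}_1)=\sigma_{21}(\mathcal{R}_2-1)/\mathcal{R}_1$, $p_s(1/\mathcal{R}_2)=[\mathcal{R}_1(1+\sigma_{21}(\mathcal{R}_2-1))-\mathcal{R}_2]/\mathcal{R}_2^2$, and $p_s(1)=(\mathcal{R}_1-1)+\sigma_{21}(\mathcal{R}_2-1)$. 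Since $a>0$ and $p_s(0)=c<0$, the unique positive root obeys $s_0<t\iff p_s(t)>0$ for $t>0$; hence each defining inequality of $\Omega^0$ is equivalent to one of $1-\mathcal{R}_1 s_0>0$, $1-\mathcal{R}_2 s_0>0$.

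I then reduce everything to the sign of $A_1^0$. Straight from the quadratic one finds $\sigma_{21}(1-\mathcal{R}_2 s_0)=s_0\,A_1^0/\gamma_1$, so $1-\mathcal{R}_2 s_0$ carries the sign of $A_1^0$; the identity $p_s(1/\mathcal{R}_1)=\sigma_{21}(\mathcal{R}_2-1)/\mathcal{R}_1$ shows $1-\mathcal{R}_1 s_0$ carries the sign of $\mathcal{R}_2-1$; and the definition rearranges to $\sigma_{21}\mathcal{R}_1(1-s_0)=A_1^0/\gamma_1+(1-\mathcal{R}_1 s_0)$. On $\Gamma^*$, the equality $B_1=B_2=\gamma_1\gamma_2^2(\mathcal{R}_2-1)$ together with positivity of $\sigma_{21}+(1-\sigma_{21})\mathcal{R}_1 s_0$ forces $\operatorname{sign}(A_1^0)=\operatorname{sign}(\mathcal{R}_2-1)$. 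Chaining these, the quantities $\mathcal{R}_1-1,\ \mathcal{R}_2-1,\ 1-s_0,\ A_1^0,\ 1-\mathcal{R}_1 s_0,\ 1-\mathcal{R}_2 s_0$ all share one common sign $\epsilon$. Once $\epsilon=+$, the bound $A_1^0>0$ with $s_0<1$ gives $\mathcal{R}_1(\sigma_{21}+(1-\sigma_{21})s_0)>1$ with denominator $<1$, hence $\mathcal{R}_1>1$, and all three inequalities of $\Omega^0$ follow at once.

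The main obstacle is pinning $\epsilon=+$. The relation $B_1=B_2$ by itself does not achieve this: read off the quadratic's positive root it also admits an infeasible branch with $\epsilon=-$ (there $s_0>1$, $\mathcal{R}_1,\mathcal{R}_2<1$, and $B_1=B_2<0$), the two branches meeting at $(\mathcal{R}_1,\mathcal{R}_2)=(1,1)$ with $s_0=1$. I would fix the sign by restricting to the coexistence branch, where $s_0$ is the leading order of a genuine susceptible fraction $S^*\in(0,1)$, so that $1-s_0>0$ forces $\epsilon=+$; equivalently, one restricts to $\mathcal{R}_1>1$. A self-contained alternative is to parametrize $\Gamma^*$ by $s_0$, solve the pair (\eqref{eq:quadratic_eq4S}, \eqref{eq:R2star}) for $\mathcal{R}_1(s_0),\mathcal{R}_2(s_0)$, and verify $\mathcal{R}_1(s_0)>1$ exactly for $s_0\in(0,1)$ via the transition point $(1,1)$; this branch selection is the only genuinely delicate step, the rest being the sign bookkeeping above.
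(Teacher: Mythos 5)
Your proof of the equivalence part coincides in substance with the paper's one-line argument (set $\sigma_{12}=1$ in~\eqref{eq:Bi}, so that $B_2=\gamma_1\gamma_2^2(\mathcal{R}_2-1)$, and rearrange $B_1=B_2$ into~\eqref{eq:R2star}), but for the inclusion $\Gamma^*\subset\Omega^0$ you take a genuinely different route, and it is the more robust one. The paper divides~\eqref{eq:R2star} by $1+\sigma_{21}(\mathcal{R}_2^*-1)$, asserts that
\[
\frac{\mathcal{R}_2^*}{1+\sigma_{21}(\mathcal{R}_2^*-1)}=\frac{1+\frac{\gamma_1}{\gamma_2\sigma_{21}}PQ}{1+\frac{\gamma_1}{\gamma_2}PQ}\le 1,\qquad P:=\sigma_{21}+(1-\sigma_{21})\mathcal{R}_1^*s_0,\quad Q:=A_1^0(\mathcal{R}_1^*)/\gamma_1,
\]
and concludes membership in $\Omega_2^0$ from $\mathcal{R}_1^*>1$, finishing with an inclusion between $\Omega_1^0$ and $\Omega_2^0$. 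You instead evaluate $p_s$ at $1/\mathcal{R}_1$, $1/\mathcal{R}_2$ and $1$, and use the identity $\sigma_{21}(1-\mathcal{R}_2 s_0)=s_0A_1^0/\gamma_1$ to reduce all three defining inequalities of $\Omega^0$ to one common sign $\epsilon$; I checked each of your identities and the sign chain, and they are correct. This buys something concrete: on the coexistence branch $PQ>0$, so the paper's displayed ratio is in fact $>1$ (e.g., with $\gamma_1=0.9$, $\gamma_2=1.2$, $\sigma_{21}=1/2$, $\mathcal{R}_1^*=1.5$, $\mathcal{R}_2^*\approx1.28$ one gets ratio $\approx1.12$), so its inequality ``$\le1$'' is reversed precisely where the conclusion holds, and $\mathcal{R}_1^*>1$ alone does not give the $\Omega_2^0$ threshold; the correct deduction is exactly yours, namely that $\mathcal{R}_1(1+\sigma_{21}(\mathcal{R}_2-1))>\mathcal{R}_2$ is equivalent to $p_s(1/\mathcal{R}_2)>0$, i.e.\ to $1-\mathcal{R}_2s_0>0$, i.e.\ to $A_1^0>0$, which your chain supplies. (The paper's closing ``$\Omega_1^0\subset\Omega_2^0$'' is also transposed: at $\sigma_{12}=1$ it is $\Omega_2^0\subset\Omega_1^0$ that holds and is needed.)

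Your caveat about branch selection is likewise genuine and not pedantry: with $\gamma_1=0.9$, $\gamma_2=1.2$, $\sigma_{21}=1/2$, the pair $\mathcal{R}_1=0.5$, $\mathcal{R}_2\approx0.740$ satisfies~\eqref{eq:R2star} with the positive root $s_0\approx2.37$ of~\eqref{eq:quadratic_eq4S}, so the set defined literally by ``$\mathcal{R}_1>0$ and~\eqref{eq:R2star}'' is \emph{not} contained in $\Omega^0$. The paper's proof silently imposes $\mathcal{R}_1^*>1$ (``Since $\mathcal{R}_1^*>1$\dots''), which is exactly your branch restriction ($s_0<1$, equivalently $\mathcal{R}_1>1$, the two being equivalent on $\Gamma^*$ by your sign argument). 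So your restriction is a correction the statement itself needs rather than a gap in your argument; under that reading your proof is complete, modulo only the harmless standing assumption $\sigma_{21}<1$ (so that $a>0$), which is the regime in which the lemma is applied.
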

\begin{proof}
    The solution of the algebraic equation~$B_1=B_2$ where~$B_j$ is given by~\eqref{eq:Bi} and~$\sigma_{12}=1$ yields~\eqref{eq:R2star}.  
Let~$(\mathcal{R}_1^*,\mathcal{R}_2^*)\in\Gamma^*$.  Then,
    \[
    \frac{\mathcal{R}_2^*}{1+\sigma_{21}(\mathcal{R}_2^*-1)}=\frac{1+\frac{\gamma_1}{\gamma_2\sigma_{21}}\left(\mathcal{R}_1^*s_0+(1-\mathcal{R}_1^*s_0)\sigma_{21}\right)\left(\sigma_{21}\mathcal{R}_1^*(1-s_0)-(1-\mathcal{R}_1^*s_0)\right)}{1+\frac{\gamma_1}{\gamma_2}\left(\mathcal{R}_1^*s_0+(1-\mathcal{R}_1^*s_0)\sigma_{21}\right)\left(\sigma_{21}\mathcal{R}_1^*(1-s_0)-(1-\mathcal{R}_1^*s_0)\right)}\le 1.
    \]
    Since~$\mathcal{R}_1^*>1$, we have
    \[
    \mathcal{R}_1^*>1\ge\frac{\mathcal{R}_2^*}{1+\sigma_{21}(\mathcal{R}_2^*-1)},
    \]
    so that~$(\mathcal{R}_1^*,\mathcal{R}_2^*)\in \Omega_2^0$.
    Moreover, for~$\sigma_{12}=1$,~$\Omega_1^0\subset \Omega_2^0$.  Thus,~$(\mathcal{R}^*_1,\mathcal{R}^*_2)\in \Omega^0$. 
\end{proof}
Note that Lemma~\ref{lem:B1equalsB2} refers to the parameters regime~$\Omega^0=\Omega^{\mu=0}$.  This is since the curve~$\Gamma^*$ is characterized only in leading order.  

Next, we consider the second necessary condition for instability,~$c_0^q\ne0$.
Substituting~\eqref{eq:R2star} in~\eqref{eq:cq} yields that for every point~$(\mathcal{R}_1^*,\mathcal{R}_2^*)\in\Gamma^*$
\begin{equation}\label{eq:cq0_along_curve}
\begin{split}
c^q_0(\mathcal{R}_1^*,\mathcal{R}_2^*)&=-(1-\sigma_{21})\times\\&\frac{8\gamma_1^6\gamma_2^3(\sigma_{21}+(1-\sigma_{21})\mathcal{R}_1^*s_0)^2[\gamma_2\sigma_{21}+(\mathcal{R}_1^*s_0+\sigma_{21}(1-\mathcal{R}_1^*s_0))A_1^0(\mathcal{R}_1^*)]\mathcal{R}_1^*s_0A_1^0(\mathcal{R}_1^*)}{\sigma_{21}^4}.
\end{split}
\end{equation}
By Lemma~\ref{lem:1minuss_0_lower_bound_gnrl_mu},~$A_1^0>0$, hence~$c^q_0<0$ for~$\sigma_{21}<1$.  
In the case~$\sigma_{21}=1$,~$c^q_0=0$ and the steady-state is linearly stable.  Indeed, in this case,~$s_i=0$ and~$\lambda_i$ behaves as~\eqref{eq:expansion_with_qi} where the numerator and denominator of~$q_j$~\eqref{eq:qi} vanish, so that~$q_j$ is defined at the limit~$\Delta\to0^+$ and equals~$q_j=q_\pm$ where 
\[
q_+=-\frac{\mathcal{R}_1}{2},\quad q_-=-\frac{\mathcal{R}_2}{2}.
\]
We conclude that for a given~$\mathcal{R}_1>1$ if~$\sigma_{12}=1$  then for any value of~$0<\sigma_{21}<1$, for any~$(\mathcal{R}_1,\mathcal{R}_2)\in\Gamma^*$ and for sufficiently small~$\mu>0$, the steady-state~$\phi_{CE}$ is linearly unstable. 

The following Theorem summarizes the results attained in this section and extends them to account for values of~$\sigma_{12}\approx 1$.
\begin{theorem}[Instability along~$\Gamma^*$]\label{theo:main_result}
Let~$\beta_1>\gamma_1>0$,~$\gamma_2>0$,~$0<\sigma_{21}<1$, and~$s_{12}>0$.  
Then, for a sufficiently small~$\mu>0$, and for~$\beta_2$ chosen such that~\eqref{eq:R2star} holds where~$\sigma_{12}=1-\mu s_{12}$, there exists a 
unique steady-state~$\phi_{CE}$ with~$I_1>0$ and~$I_2>0$. Moreover, this steady-state is linearly unstable.
\end{theorem}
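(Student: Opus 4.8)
The plan is to split the statement into its existence/uniqueness part and its instability part, and to treat the instability as a perturbation of the exactly solvable case $\sigma_{12}=1$ analyzed just above the theorem. For existence and uniqueness I would argue as follows. By Lemma~\ref{lem:B1equalsB2} the curve $\Gamma^*$ (defined by~\eqref{eq:R2star}) satisfies $\Gamma^*\subset\Omega^0$, so the chosen pair $(\mathcal{R}_1,\mathcal{R}_2)$ lies strictly inside $\Omega^0$. Since $\Omega^\mu$ depends continuously on $\mu$ and on $\sigma_{12}=1-\mu s_{12}$, this point remains in $\Omega^\mu$ for all sufficiently small $\mu>0$, and Proposition~\ref{prop:CE_small_mu} (equivalently Theorem~\ref{theo:existenceCE}) then provides a unique feasible coexistence steady state $\phi_{CE}$. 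For the stability analysis I would first observe that the roots $\lambda_1,\lambda_2,\lambda_3$ of~\eqref{eq:Plambda} are negative to leading order irrespective of the curve condition, see~\eqref{eq:lambda_123}; hence any instability must come from the four $\mathcal{O}(\sqrt\mu)$ roots $\lambda_4,\dots,\lambda_7$, whose leading terms $\sqrt\mu\,r_j$ are purely imaginary. It therefore suffices to show that one of these acquires a positive real part (its conjugate then does too, since $P(\lambda)$ is real).

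The central new ingredient is to quantify the discriminant $\Delta$ of~\eqref{eq:eq4ri} along the \emph{perturbed} curve. I would show that $\sigma_{12}=1-\mu s_{12}$ forces $\Delta=\mathcal{O}(\mu)$ while $\Delta>0$. Indeed, in~\eqref{eq:Delta} the second term carries the explicit factor $1-\sigma_{12}=\mu s_{12}$, so it is $\mathcal{O}(\mu)$ and strictly positive; whereas the first term $(B_1-B_2)^2$ is $\mathcal{O}(\mu^2)$, because $B_1=B_2$ holds exactly on $\Gamma^*$ at $\sigma_{12}=1$ and $B_1-B_2$ is smooth in $\sigma_{12}$, giving $B_1-B_2=\mathcal{O}(1-\sigma_{12})=\mathcal{O}(\mu)$. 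Writing $\Delta=\mu\,\delta+o(\mu)$ with $\delta>0$ captures exactly the crossover: $\Delta$ is no longer zero (as in the pure $\sigma_{12}=1$ case that drives the $\mu^{3/4}$ expansion~\eqref{eq:expansion_with_si}), but it is asymptotically \emph{smaller} than $\sqrt\mu$.

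In this crossover regime neither the $\mu q_j$ expansion~\eqref{eq:expansion_with_qi}, which is singular as $\Delta\to0$, nor the $\Delta=0$ expansion~\eqref{eq:expansion_with_si} is valid, so I would analyze the two nearly coalescing roots directly. Substituting $\lambda=\sqrt\mu\,r$ in $P(\lambda)$ and dividing by $\mu^{5/2}$ gives $r\,Q(r)+\sqrt\mu\,R(r)+\mathcal{O}(\mu)=0$, where $Q(r)=\gamma_1\gamma_2 r^4+P_{10}r^2+P_{20}$ and $R(r)$ collects the $P_{1j},P_{2j},P_{30}$ terms. Localizing $Q$ about the merged value $i r_0$, $r_0^2=P_{10}/(2\gamma_1\gamma_2)$, reduces the equation for the upper pair to a quadratic $(r-r_+)(r-r_-)=K$ with $K=\mathcal{O}(\sqrt\mu)$ purely imaginary and $K\propto R(i r_0)\propto -c^q_0$. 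Its roots are
\[
r=\tfrac12(r_++r_-)\pm\sqrt{D},\qquad D=-\frac{\Delta}{16\gamma_1^2\gamma_2^2 r_0^2}+K .
\]
Because $\Delta=\mathcal{O}(\mu)$ while $K=\mathcal{O}(\sqrt\mu)$, the $\Delta$-term is only a relative $\mathcal{O}(\sqrt\mu)$ correction, so $\operatorname{Re}\sqrt{D}=\operatorname{Re}\sqrt{K}+o(1)\ne0$ with the same sign as at $\Delta=0$. By~\eqref{eq:cq0_along_curve} we have $c^q_0<0$ on $\Gamma^*$ at $\sigma_{12}=1$ whenever $\sigma_{21}<1$, and $c^q_0$ is continuous in $\sigma_{12}$, hence $c^q_0<0$ stays bounded away from $0$ at $\sigma_{12}=1-\mu s_{12}$; therefore $\sqrt{K}$ has a nonzero real part, one of the two roots gives $\operatorname{Re}\lambda\sim\mu^{3/4}>0$, and $\phi_{CE}$ is linearly unstable.

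The hardest part will be the coalescing-root regime $\Delta=\mathcal{O}(\mu)$: as the two oscillatory eigenvalues merge, $Q'$ degenerates and ordinary perturbation theory breaks down, so the uniform expansion above must be justified and the neglected $\mathcal{O}(\mu)$ remainder in $r\,Q(r)+\sqrt\mu\,R(r)$ controlled well enough to guarantee the sign of the real part \emph{uniformly} over this regime. Making rigorous the two estimates that underlie the whole argument, namely $(B_1-B_2)^2=\mathcal{O}(\mu^2)$ (so that the cross term dominates $\Delta$) and $\operatorname{Re}\sqrt{D}=\operatorname{Re}\sqrt{K}+o(1)$, is the technical crux.
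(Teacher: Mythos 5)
Your proposal is correct and rests on the same underlying mechanism as the paper: existence and uniqueness from $\Gamma^*\subset\Omega^0$ (Lemma~\ref{lem:B1equalsB2}) together with Theorem~\ref{theo:existenceCE}/Proposition~\ref{prop:CE_small_mu}, and instability from the degenerate pair of $\mathcal{O}(\sqrt\mu)$ eigenvalues whose $\mu^{3/4}$ correction has nonzero real part governed by $c^q_0<0$ along the curve, via \eqref{eq:cq0_along_curve} and $A_1^0>0$. Where you differ is in how the perturbation $\sigma_{12}=1-\mu s_{12}$ is absorbed. The paper (Appendix~\ref{app:proof_main_result}) keeps the bookkeeping of \eqref{eq:Plambda}, in which the coefficients $P_{ij}$ are $\mu$-independent, and observes that the perturbation changes only $P_{22},P_{23},P_{31},\dots$, leaving $s_0$ and $P_{10},P_{11},P_{20},P_{30}$ --- and hence the leading-order $\Delta$ and $c^q_0$ --- untouched, so the $\Delta=0$ expansion \eqref{eq:expansion_with_si} applies verbatim. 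You instead track the full discriminant, showing $\Delta=\mu\delta+o(\mu)$ with $(B_1-B_2)^2=\mathcal{O}(\mu^2)$ dominated by the cross term in \eqref{eq:Delta} carrying $1-\sigma_{12}=\mu s_{12}$, and then resolve the nearly coalescing pair by the local quadratic reduction $(r-r_+)(r-r_-)=K$ with $K=\mathcal{O}(\sqrt\mu)$ purely imaginary, so that the splitting inherits $\operatorname{Re}\sqrt{K}\ne0$ since $\Delta=\mathcal{O}(\mu)\ll|K|$. This is essentially the paper's uniform expansion \eqref{eq:gen_approximation_sj_including_coeffs} of Section~\ref{sec:uniformExpansion} specialized to $\delta=\Delta/\sqrt\mu\to0$; what you single out as the ``technical crux'' (uniform control in the crossover regime $\mu\ll\Delta\le\sqrt\mu$) is exactly what that section supplies, so your route buys explicit uniformity inside the proof itself, at the cost of rederiving that expansion, while the paper's route buys a two-line reduction to the already-analyzed $\Delta=0$ case. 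Two minor slips, neither damaging: the merged value should satisfy $r_0^2=-P_{10}/(2\gamma_1\gamma_2)$ (the roots are purely imaginary because $P_{10}>0$), and you should record that $\beta_1>\gamma_1$ gives $\mathcal{R}_1=\beta_1/(\gamma_1+\mu)>1$ for small $\mu$, which is what puts the chosen point on $\Gamma^*$ in the first place.
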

\begin{proof}
See Appendix~\ref{app:proof_main_result} for an extension of the results.
\end{proof}
Theorem~\ref{theo:main_result} points to a region in the parameter space~$(\mathcal{R}_1,\mathcal{R}_2,\sigma_{12},\sigma_{21})$ in which the system~\eqref{eq:model} gives rise to a linearly unstable steady state of coexistence.  In this region, the other steady-states of the system are also unstable, see lemma~\ref{lem:singleStrainSteadyState}.  Therefore, one may expect the system to oscillate in this region.  In the subsequent section, we will demonstrate this numerically.

Theorem~\ref{theo:main_result} contradicts~\cite[Theorem 1.2]{chung2016dynamics}.  The reason for this contradiction is that the proof of Theorem 1.2 in~\cite{chung2016dynamics} assumes that the expansion~\eqref{eq:expansion_with_qi} is uniformly valid, while Lemma~\ref{lem:B1equalsB2} and the proof of Theorem~\ref{theo:main_result} show that the expansion~\eqref{eq:expansion_with_qi} becomes invalid in feasible regions of the parameter space, specifically along~$\Gamma^*$, and directly associates the instabilities of~$\phi_{CE}$ with the cases in which the expansion~\eqref{eq:expansion_with_qi} is invalid.

A numerical verification of the approximations of the eigenvalues~$\{\lambda_j\}_{j=1}^7$ presented above is presented in Appendix~\ref{app:verificationLambda}.
\subsection{Numerical results}\label{sec:numerics}
Theorem~\ref{theo:main_result} characterizes a curve~$\Gamma^*$ in the parameter space~$(\mathcal{R}_1,\mathcal{R}_2)$ in which the steady state of coexistence is linearly unstable.  One may expect an oscillatory solution to emerge in this case, yet our analytical results do not provide information on the global dynamics of the system, for example, whether it converges to a periodic solution or displays more complicated dynamics, e.g., chaotic behavior.  Here, we complement the analysis with a numerical study to obtain a better picture of the system dynamics.  Details of the numerical methods are provided in Appendix~\ref{app:numerical_methods}.

\begin{figure}[ht!]
\begin{center}
\scalebox{0.66}{\includegraphics{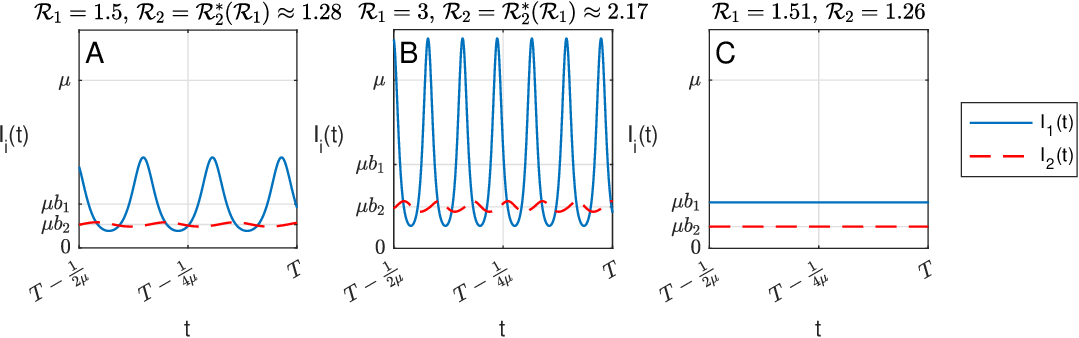}}
\caption{Solutions~$I_1(t)$ and~$I_2(t)$ of~\eqref{eq:model} in the time interval~$[T-1/2\mu,T]$ where~$T=10^{5}$, for the case~$\mu=2\cdot 10^{-4}$,~$\sigma_{12}=1$,~$\sigma_{21}=0.5$,~$\gamma_1=0.9$,~$\gamma_2=1.2$ and A: $\mathcal{R}_1=1.5$,~$\mathcal{R}_2=\mathcal{R}_2^*(\mathcal{R}_1)\approx 1.28$, 
B: $\mathcal{R}_1=3$,~$\mathcal{R}_2=\mathcal{R}_2^*(\mathcal{R}_1)\approx 2.17$, C: $\mathcal{R}_1=1.51$,~$\mathcal{R}_2=1.26$. Note that the y-axis ticks present approximations~$I_i^*\approx \mu b_i+\mathcal{O}(\mu^2)$, see Proposition~\ref{prop:CE_small_mu}.
}
\label{fig:Iit_sigma05}
\end{center}
\end{figure}Figures~\ref{fig:Iit_sigma05}A and~\ref{fig:Iit_sigma05}B present a solution of~\eqref{eq:model} for a set of parameters chosen along two points in~$\Gamma^*$: 
The point with~$\mathcal{R}_1=1.5$ in Figure~\ref{fig:Iit_sigma05}A, and the point with~$\mathcal{R}_1=3$ in Figure~\ref{fig:Iit_sigma05}B.  In particular, we chose~$\mu=2\cdot 10^{-4}$ which corresponds for a unit time of a week to a demographic turnover rate of 75 years.
In both cases, Theorem~\ref{theo:main_result} implies that a steady state of coexistence exists and that it is linearly unstable.  As expected, we observe that the solution approaches a periodic function. 
We have conjectured that the coexistence steady-state if stable when~$(\mathcal{R}_1,\mathcal{R}_2)$ is not near~$\Gamma^*$.  In Figure~\ref{fig:Iit_sigma05}C we perturb the values of~$\mathcal{R}_1$ and~$\mathcal{R}_2$ used in Figure~\ref{fig:Iit_sigma05}A by~$0.5\%$ and~$1.5\%$, respectively.   As expected, we observe that the system converges to the coexistence steady-state.  

\begin{figure}[ht!]
\begin{center}
\scalebox{0.66}{\includegraphics{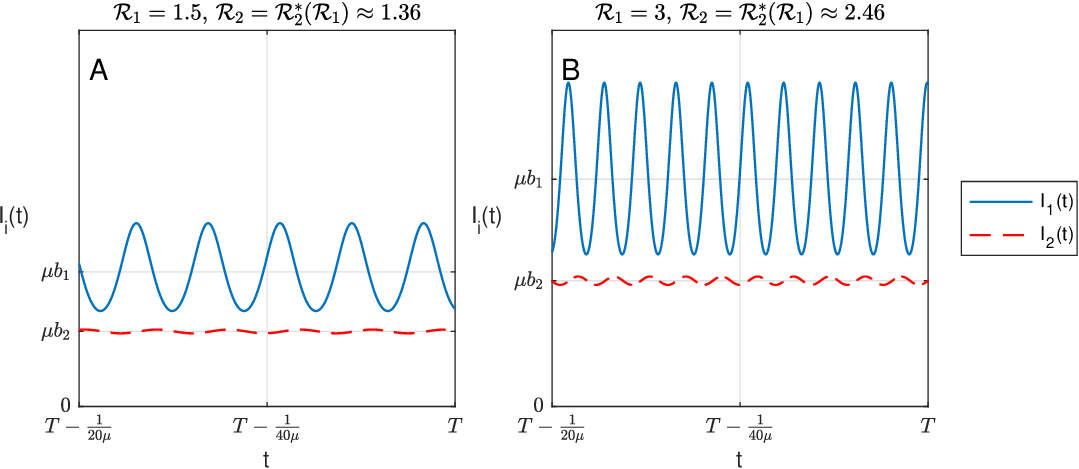}}
\caption{Weak cross-immunity and weak asymmetry: Solutions~$I_1(t)$ and~$I_2(t)$ of~\eqref{eq:model} in the time interval~$[T-\mu/20,T]$ where~$T=10^{7}$, for the case~$\mu= 10^{-6}$,~$\sigma_{12}=1$,~$\sigma_{21}=0.95$,~$\gamma_1=0.9$,~$\gamma_2=1.2$ and A: $\mathcal{R}_1=1.5$,~$\mathcal{R}_2=\mathcal{R}_2^*(\mathcal{R}_1)\approx 1.36$, 
B: $\mathcal{R}_1=3$,~$\mathcal{R}_2=\mathcal{R}_2^*(\mathcal{R}_1)\approx 2.46$.  Note that the y-axis ticks present approximations~$I_i^*\approx \mu b_i+\mathcal{O}(\mu^2)$, see Proposition~\ref{prop:CE_small_mu}.
}
\label{fig:Iit_sigma95}
\end{center}
\end{figure}Theorem~\ref{theo:main_result} points to an oscillatory region in~$(\mathcal{R}_1,\mathcal{R}_2)$ for any value of~$0<\sigma_{21}<1$ and for~$\sigma_{12}\approx1$.  In particular,~$\sigma_{21}$ can be arbitrarily close to~$\sigma_{12}$.  Namely, oscillations are possible for arbitrarily weak cross-immunity.  We now demonstrate this result numerically.  To do so, we consider the case~$\sigma_{21}=0.95$ and~$\sigma_{12}=1$, and sample two points on~$\Gamma^*$, one with~$\mathcal{R}_1=1.5$ and the second with~$\mathcal{R}_1=3$.  When taking~$\mu=2\cdot 10^{-4}$ as in Figure~\ref{fig:Iit_sigma05}, we observe that the system converges to the coexistence steady-state, i.e.,~$\phi_{CE}$ is stable (data not shown).  Yet, as expected, when considering a smaller value of~$\mu$, we observe that the coexistence steady-state is unstable and the system approaches a periodic function, see Figure~\ref{fig:Iit_sigma95}.

The above examples were provided with~$\sigma_{12}=1$.  Theorem~\ref{theo:main_result} also applied to~$\sigma_{12}\approx1$.  This result aligns with numerical observations of periodic solutions of~\eqref{eq:model} with co-circulating strains for~$\mu=0.0004$ and~$\sigma_{12}=0.999386\approx1-1.5\mu$, see~\cite[Figure 5]{chung2016dynamics}.

\section{Beyond~$\Gamma^*$ - A uniform expansion}\label{sec:uniformExpansion}
Theorem~\ref{theo:main_result} describes the behavior of system~\eqref{eq:model} along the curve~$\Gamma^*$ in the parameter plane~$(\mathcal{R}_1,\mathcal{R}_2)$.
In this section, we aim to go beyond~$\Gamma^*$ and provide a uniform approximation of the eigenvalues of the relevant Jacobian in all the feasible parameter space.  
That is, our objective is to derive an approximation of~$\{\lambda_j\}_{j=4}^7$, \eqref{eq:lambda_47}, which, for a sufficiently small~$\mu$, is valid for all~$(\mathcal{R}_1,\mathcal{R}_2)\in\Omega^\mu$,~\eqref{eq:Omega}.

\begin{figure}[ht!]
\begin{center}
\scalebox{0.66}{\includegraphics{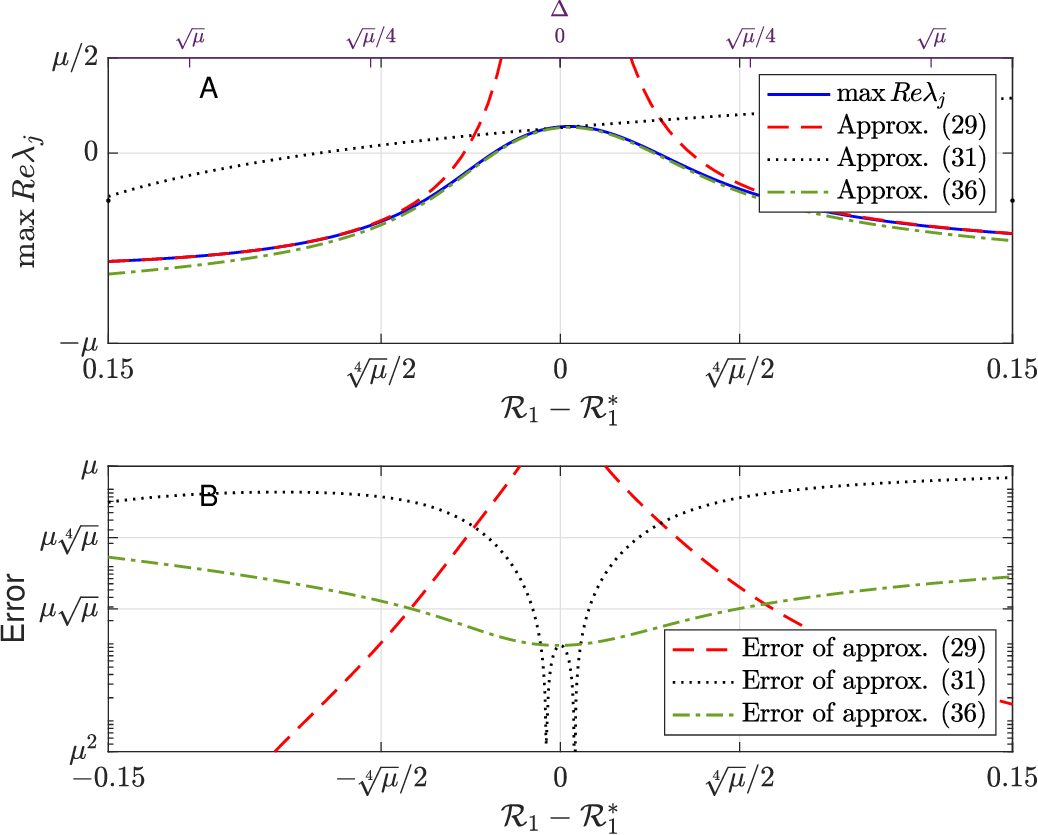}}
\caption{A: Profile of~$\max_{j=4}^7{\rm Re}\lambda_j$ as a function of~$\mathcal{R}_1-\mathcal{R}_1^*$ for the same parameters used in Figure~\ref{fig:Iit_sigma95}, and in particular for~$\mathcal{R}_2=\mathcal{R}_2^*$ (blue solid curve).  Super-imposed are the approximations~$\max_{j=4}^7{\rm Re}\lambda_j^{\rm approx}$ where~$\lambda_j^{\rm approx}$ is given by approximation~\eqref{eq:expansion_with_qi} (dashed red curve), 
approximation~\eqref{eq:expansion_with_si} (dotted black curve), and 
approximation~\eqref{eq:gen_approximation_sj_including_coeffs} (dash-dotted green curve).    B: Same data as in A with corresponding error graphs~$|\max_{j=4}^7{\rm Re}\lambda_j-\max_{j=4}^7{\rm Re}\lambda_j^{\rm approx}|$.
}
\label{fig:uniformExpansion}
\end{center}
\end{figure}The analysis in Section~\ref{sec:CE_perturbation} showed that the approximation~\eqref{eq:expansion_with_qi} for the eigenvalues~$\{\lambda_j\}_{j=4}^7$ of~$P(\lambda)$ is not uniformly valid for all~$(\mathcal{R}_1,\mathcal{R}_2)\in\Omega^\mu$.
Instead, its applicability is limited to a certain distance from~$\Gamma^*$ where~$\Delta=0$, see the red dashed curve in Figure~\ref{fig:uniformExpansion}.  The expansion in~\eqref{eq:expansion_with_si}, on the other hand, is valid for all~$(\mathcal{R}_1,\mathcal{R}_2)\in\Gamma^*$ or when~$\Delta=0$, see the black dotted curve in Figure~\ref{fig:uniformExpansion}.  Consequently, the validity regions of the two approximations are distinct and do not intersect. Our objective is to formulate a more precise approximation for~$\lambda_j$ near~$\Gamma^*$, in order to create an overlapping validity region with~\eqref{eq:expansion_with_qi}, and therefore ensure that the two approximations span the entire domain.   

We first better characterize the validity region of the approximation~\eqref{eq:expansion_with_qi} in the vicinity of~$\Gamma^*$, i.e., when~$0<\Delta\ll1$.  The expansion~\eqref{eq:expansion_with_qi} is valid when
\[
 \mu |q_j|=\frac{\mu |c^q_0|}{16\gamma_1^3\gamma_2^3|r_j^2|\sqrt{\Delta}}\ll \sqrt{\mu}|r_j|,\qquad 0<\Delta\ll1,
\]
where the equality is due to~\eqref{eq:qi}.
Thus, expansion~\eqref{eq:expansion_with_qi} is valid when~$\Delta\gg \mu$.
Subsequently, we refine the approximation~\eqref{eq:expansion_with_si} to ensure its validity when
\[
\Delta \le \sqrt{\mu},
\]
allowing both approximations to overlap near~$\Gamma^*$, where
\[
\mu\ll\Delta\le\sqrt{\mu}.
\]

Let us consider~$\Delta=\delta\sqrt{\mu}$, and the following generalization of expansion~\eqref{eq:expansion_with_si} for~$\lambda_j(\delta)$ 
\begin{subequations}\label{eq:gen_approximation_sj_including_coeffs}
\begin{equation}\label{eq:gen_expansion_with_sj}
\lambda_j(\delta)=\sqrt\mu r_j+\mu^{3/4}s_j(\delta)+\mu q_j^s(\delta)+\mathcal{O}(\mu^{5/4}).
\end{equation}
Substituting~\eqref{eq:gen_expansion_with_sj} in~$P(\lambda)$ and equating by powers of~$\mu$, yields at leading order, Equation~\eqref{eq:eq4ri} for~$r_j$, and at the next order
\begin{equation}
s_j^2(\delta)=s_j^2(0)-\frac{\delta}{8\gamma_1\gamma_2P_{10}},
\end{equation}
where~$s_j(0)$ is given by~\eqref{eq:si}.  To derive a uniform expansion with an error smaller than~$\mathcal{O}(\mu)$, we continue to resolve the~$\mathcal{O}(\mu)$ term in~\eqref{eq:gen_expansion_with_sj}.
The next order yields
\begin{equation}
q_j^s(\delta)=q_j^s(0)-i \frac{\sqrt{2\gamma_1\gamma_2P10}}{8\gamma_1\gamma_2P_{10}^2}\delta,\quad q_j^s(0)=\frac{4\gamma_1^3\gamma_2^3P_{30} - \gamma_1\gamma_2P_{10}^2P_{11}+ (\gamma_1+\gamma_2)P_{10}^3}{4\gamma_1^2\gamma_2^2P_{10}^2}.
\end{equation}
\end{subequations}
Approximation~\eqref{eq:gen_approximation_sj_including_coeffs} of~$\lambda_j$ is valid for~$\Delta=\mathcal{O}(\sqrt\mu)$, see dash-dotted curve in Figure~\ref{fig:uniformExpansion}.

The validity regions of the approximations~\eqref{eq:expansion_with_qi} and~\eqref{eq:gen_approximation_sj_including_coeffs} fully cover the feasible space~\eqref{eq:Omega} and overlap in the vicinity of~$\Gamma^*$ where~$\mu\ll\Delta\le \sqrt{\mu}$.  We now derive a uniform approximation by combining the two approximations.
\begin{theorem}[Uniform approximation]\label{theo:uniform}
Let~$\gamma_1,\gamma_2>0$,~$0<\sigma_{21}<1$ and~$\sigma_{12}=1-s_{12}\mu$,~$s_{12}\ge0$, and define~$\Omega^\mu\subset\{\mathcal{R}_1\ge1,\mathcal{R}_2\ge1\}$ by~\eqref{eq:Omega}.

Then, there exists a sufficiently small~$\mu>0$, such that for any choice of parameters~$(\mathcal{R}_1,\mathcal{R}_2)\in\Omega^\mu$,~\eqref{eq:Omega},
\begin{enumerate}
\item There exists a unique steady-state~$\phi_{CE}=\phi_{CE}(\mathcal{R}_1,\mathcal{R}_2)$ of~\eqref{eq:model} with~$I_1>0$ and~$I_2>0$.  
\item The eigenvalues of the Jacobian of~\eqref{eq:model} at~$\phi_{CE}(\mathcal{R}_1,\mathcal{R}_2)$ satisfy~\eqref{eq:lambda_123}, and for~$j=4,5,\cdots,7$,
\begin{equation}\label{eq:uniformExpansion}
\begin{cases}
\lambda_j=\sqrt\mu\, r_j+\mu q_j+\mathcal{O}(\mu\sqrt\mu)& \Delta \ge \sqrt\mu,\\
\lambda_j=\sqrt\mu\, r_j+\mu^{3/4}s_j(\delta)+\mu q^s_j(\delta)+\mathcal{O}(\mu^{5/4})& \Delta< \sqrt\mu.
\end{cases}
\end{equation}
\end{enumerate}
where~$\Delta=\Delta(\mathcal{R}_1,\mathcal{R}_2)$ is defined by~\eqref{eq:Delta}, and
\[
\delta:=\frac{\Delta}{\sqrt{\mu}},
\]
and the coefficients are defined as follows:~$\{r_j\}_{j=4}^7$ are the roots of~\eqref{eq:eq4ri},~$q_j$ is given by~\eqref{eq:qi}, and~$s_j(\delta),\,q^s_j(\delta)$ are given by~\eqref{eq:gen_approximation_sj_including_coeffs}.
\end{theorem}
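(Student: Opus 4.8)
The plan is to assemble the two local expansions obtained in Sections~\ref{sec:CE_perturbation} and~\ref{sec:uniformExpansion} into a single statement with uniform error control, partitioning~$\Omega^\mu$ according to whether~$\Delta$ exceeds~$\sqrt\mu$. Part~(1) is immediate: the hypotheses~$\sigma_{12}=1-s_{12}\mu$ and~$(\mathcal{R}_1,\mathcal{R}_2)\in\Omega^\mu$ place us under Proposition~\ref{prop:CE_small_mu}, which already furnishes a unique coexistence steady-state~$\phi_{CE}$ and the leading-order data~$s_0,b_i$ that enter every eigenvalue formula. For the eigenvalue claim~(2), the roots~$\lambda_{1,2,3}$ are disposed of by~\eqref{eq:lambda_123}, so the entire content lies in the four~$\mathcal{O}(\sqrt\mu)$ roots~$\lambda_4,\dots,\lambda_7$, whose leading coefficients~$r_j$ solve the biquadratic~\eqref{eq:eq4ri} with~$r_\pm^2=(-P_{10}\pm\sqrt\Delta)/(2\gamma_1\gamma_2)$.

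The organizing quantity is~$\Delta$, which measures the separation of the conjugate pairs~$\{\pm r_+\}$ and~$\{\pm r_-\}$. When~$\Delta$ is bounded below, the four roots of~\eqref{eq:eq4ri} are simple and well separated and a regular~$\sqrt\mu$-expansion applies; as~$\Delta\to0$ the pairs coalesce into two double roots and a finer scaling is forced. I would therefore treat the two regimes separately, reproducing the formal computations of Section~\ref{sec:uniformExpansion}. For~$\Delta\ge\sqrt\mu$, substituting~$\lambda_j=\sqrt\mu r_j+\mu q_j+\mathcal{O}(\mu\sqrt\mu)$ into~$P(\lambda)$ and matching orders recovers~\eqref{eq:qi} for~$q_j$; the asymptotic ordering~$\mu|q_j|\ll\sqrt\mu|r_j|$ needed for the series to be genuine holds precisely because~$\mu|q_j|=\mu|c^q_0|/(16\gamma_1^3\gamma_2^3|r_j^2|\sqrt\Delta)$ and~$\Delta\ge\sqrt\mu\gg\mu$. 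For~$\Delta<\sqrt\mu$, writing~$\Delta=\delta\sqrt\mu$ with~$0\le\delta<1$ and substituting the generalized ansatz~\eqref{eq:gen_expansion_with_sj} yields, order by order,~$s_j^2(\delta)$ and then~$q_j^s(\delta)$ exactly as in~\eqref{eq:gen_approximation_sj_including_coeffs}.

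To promote these formal series to genuine asymptotics with the stated remainders I would use root-tracking rather than bare dominant balance. Where~$\Delta$ is bounded away from~$0$ the nonzero roots~$r_j$ of~\eqref{eq:eq4ri} are simple, so the implicit function theorem applied to the normalized polynomial~$P(\sqrt\mu\,\rho)/(\mu\sqrt\mu)$---whose~$\mu\to0$ limit is~$\rho(\gamma_1\gamma_2\rho^4+P_{10}\rho^2+P_{20})$ by~\eqref{eq:Plambda_residual}---regarded as a function of~$\rho$ and~$\sqrt\mu$, produces a unique analytic branch through each~$\sqrt\mu\,r_j$ and hence~\eqref{eq:expansion_with_qi} with an~$\mathcal{O}(\mu\sqrt\mu)$ error; continuity and compactness render this uniform on compact subsets of~$\{\Delta\ge\sqrt\mu\}$. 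Near~$\Gamma^*$, where~$r_+$ and~$r_-$ collide and the implicit function theorem fails, I would instead rescale~$\lambda=\sqrt\mu\,r+\mu^{3/4}\xi$ and apply Rouch\'e's theorem (equivalently, the Newton--Puiseux structure of the degenerate root) to the rescaled polynomial, capturing the~$\mu^{3/4}$ splitting governed by~$s_j(\delta)$ and then continuing one order further for~$q_j^s(\delta)$.

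The main obstacle I anticipate is exactly this uniformity across the coalescence curve~$\Gamma^*$: the two branches of~\eqref{eq:uniformExpansion} must have overlapping ranges of validity so that~$\Omega^\mu$ is covered without a gap and the global error remains controlled through the transition. I would close this by recording that~\eqref{eq:expansion_with_qi} remains valid whenever~$\Delta\gg\mu$ while~\eqref{eq:gen_approximation_sj_including_coeffs} remains valid whenever~$\Delta\le\sqrt\mu$, so that both hold throughout the strip~$\mu\ll\Delta\le\sqrt\mu$. Since both are then valid asymptotic representations of the \emph{same} actual eigenvalues in this strip, they necessarily agree to the common accuracy; this consistency legitimizes patching them into the single piecewise formula~\eqref{eq:uniformExpansion} and yields a uniform approximation over all of~$\Omega^\mu$.
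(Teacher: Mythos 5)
Your proposal follows essentially the same route as the paper: part (1) via Proposition~\ref{prop:CE_small_mu}, and part (2) by pairing expansion~\eqref{eq:expansion_with_qi} (valid when $\Delta\gg\mu$, since $\mu|q_j|\sim \mu/\sqrt{\Delta}$ must be small relative to $\sqrt{\mu}\,|r_j|$) with the rescaled expansion~\eqref{eq:gen_approximation_sj_including_coeffs} obtained from $\Delta=\delta\sqrt{\mu}$, then patching them through the overlap strip $\mu\ll\Delta\le\sqrt{\mu}$ --- which is precisely how Section~\ref{sec:uniformExpansion} establishes the theorem (the paper gives no separate proof beyond this derivation). Your implicit-function-theorem and Rouch\'e-type root-tracking supplies rigor that the paper leaves at the level of dominant balance, but it is scaffolding on the same decomposition and the same order-by-order computations rather than a genuinely different argument.
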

\begin{remark}
We note that the analysis of Section~\ref{sec:CE_perturbation} yields, in the case~$\Delta=0$, an~$\mathcal{O}(\mu)$ approximation of~$\{\lambda_j\}_{j=4}^7$ to determine the stability of the steady state of coexistence.  In the current analysis, we continue to resolve higher-order terms in~\eqref{eq:expansion_with_si} to allow a uniform approximation of~$\{\lambda_j\}_{j=4}^7$ with~$o(\mu)$ error.  
Appendix~\ref{app:verificationLambda} presents a numerical verification of the refined approximation~\eqref{eq:gen_expansion_with_sj} of the eigenvalues~$\{\lambda_j\}_{j=4}^7$ in the case~$\Delta=0$.
\end{remark}

\begin{figure}[ht!]
\begin{center}
\scalebox{0.66}{\includegraphics{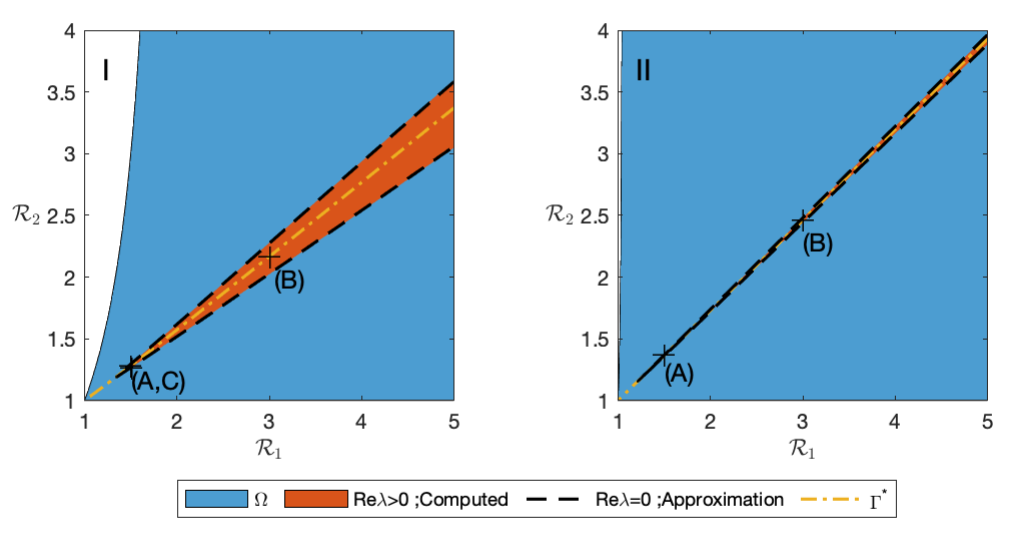}}
\caption{Feasible parameter region~$\Omega^\mu$ (blue area) and region in which the coexistence steady-state is unstable (red area).  Super-imposed are the approximated boundaries of the region in which the coexistence steady-state is unstable (dashed black curve) and the curve~$\Gamma^*$ (dash-dotted yellow curve).
I: Same parameters as those used in Figure~\ref{fig:Iit_sigma05}, expect~$R_i$ values.  Points (A)-(C) correspond to~$\mathcal{R}_i$ values used in graphs~\ref{fig:Iit_sigma05}A-\ref{fig:Iit_sigma05}C, respectively.
II: Same parameters as those used in Figure~\ref{fig:Iit_sigma95}, expect~$R_i$ values.  Points (A), (B) correspond to~$\mathcal{R}_i$ values used in graphs~\ref{fig:Iit_sigma95}A and~\ref{fig:Iit_sigma95}B, respectively.  Details of the numerical methods used to produce these graphs are provided in Appendix~\ref{app:numerical_methods}.
}
\label{fig:bifurcationGraph}
\end{center}
\end{figure}
Theorem~\ref{theo:uniform} presents a uniform approximation~\eqref{eq:uniformExpansion} for~$\lambda_j$, the eigenvalues of the associated Jacobian, across the entire feasible parameter space~$\Omega^\mu$.
Figure~\ref{fig:bifurcationGraph}I illustrates the utility of Theorem~\ref{theo:uniform}, showing~$\Omega^\mu$ (blue area) and the region where~$\max{\rm Re} \lambda_j>0$ (red area), using the same parameters as Figure~\ref{fig:Iit_sigma05}, specifically with~$\mu=2\cdot 10^{-4}$. This red area denotes where the coexistence steady-state~$\phi_{CE}$ is unstable. As anticipated, the curve~$\Gamma^*$ (dash-dotted yellow line) falls within this area. For every point in~$\Omega$, we use~\eqref{eq:uniformExpansion} to estimate~$\max{\rm Re} \lambda_j$, with a black dashed line outlining the contour where~$\max{\rm Re}\lambda^{\rm approx}=0$. The approximation~\eqref{eq:uniformExpansion} closely matches the region of instability for~$\phi_{CE}$. In Figure~\ref{fig:bifurcationGraph}II, we depict the same data with the parameters of Figure~\ref{fig:Iit_sigma95}, again observing that~\eqref{eq:uniformExpansion} accurately captures the instability region for~$\phi_{CE}$.
\section{Conclusions}\label{sec:conclusions}
This study investigates the existence, uniqueness and local stability of a coexistence steady state in a two-strain transmission model featuring partial cross-immunity. We particularly address a long-standing question for this model regarding the conditions required for the existence of periodic solutions.

The key result of this study, Theorem~\ref{theo:main_result}, reveals a new oscillatory regime of the system~\eqref{eq:model} by showing that when one of the strains does not effectively provide cross-immunity to infection by the other strain ($\sigma_{12}\approx1$) and the second strain confers some degree of partial cross-immunity ($0<\sigma_{21}<1$), then a coexistence steady-state uniquely exists in a subset~$\Omega^\mu$ of the parameter plane~$(\mathcal{R}_1,\mathcal{R}_2)$, and it is linearly unstable for values of~$(\mathcal{R}_1,\mathcal{R}_2)$ in a region about a curve~$\Gamma^*\subset(\mathcal{R}_1,\mathcal{R}_2)$. The numerical results show that, as expected, for parameters in the vicinity of~$\Gamma^*$, the system converges to periodic solutions with co-circulating strains.  

The current literature considering sustained oscillation in epidemic systems with cross-immunity~\cite{andreasen1997dynamics,castillo1989epidemiological,Gupta94}, and in particular the work~\cite{chung2016dynamics} that addresses a model identical to ours, implies that the necessary conditions for oscillations include: 1) Sufficiently strong cross-immunity induced by one of the strains.
2) Sufficiently strong asymmetry in cross-immunity parameters. 
The first condition is intuitive since it is known that a multi-strain SIR system does not give rise to oscillations in the absence of cross-immunity. 
In contrast to this picture, our work shows that, surprisingly, sustained oscillations may also arise in cases where cross-immunity is weak and the asymmetry in cross-immunity parameters is weak.  
We demonstrate this result by providing an example of oscillations about an unstable coexistence steady state with $\sigma_{12}=1$ (no cross-immunity) and~$\sigma_{21}=0.95$ (weak cross-immunity), see Figure~\ref{fig:Iit_sigma95}. 
These results are counterintuitive and contradict the results of~\cite[Theorem 1.2]{chung2016dynamics}.  

The contradiction between the results of~\cite[Theorem 1.2]{chung2016dynamics} and our results does not originate from differences in the problem studied.
In fact, the analysis in both studies, namely in~\cite{chung2016dynamics} and the current study, considers the same mathematical model in the same parameter space and focuses on the same asymptotic regime~\eqref{eq:asymptotic_regime}.  Rather, the reason for the contradiction between Theorem~\ref{theo:main_result} and~\cite[Theorem 1.2]{chung2016dynamics} is that the proof of Theorem 1.2 in~\cite{chung2016dynamics} assumes expansion~\eqref{eq:expansion_with_qi} to be uniformly valid, that is, it assumes that for~$0<\mu\ll1$ the relevant eigenvalues behave as
\[
\lambda_j=\sqrt{\mu} r_j+\mu q_j+\mathcal{O}(\mu\sqrt\mu).
\]  
However, our results and, in particular, Lemma~\ref{lem:B1equalsB2} and the proof of Theorem~\ref{theo:main_result}, show that~$q_j\sim 1/\sqrt\Delta$ for small enough~$
0<\mu$.  Hence, expansion~\eqref{eq:expansion_with_qi} becomes invalid in feasible regions of the parameter space, specifically along the curve~$\Gamma^*$ where~$\Delta=0$.  Moreover, Theorem~\ref{theo:main_result} directly associates instabilities of~$\phi_{CE}$ with the cases in which expansion~\eqref{eq:expansion_with_qi} is invalid.  

The insight that expansion~\eqref{eq:expansion_with_qi} is not uniformly valid, the characterization of the region of invalidity, and the association of this region with instabilities became possible due to a combination of somewhat technical results that allowed us to confront the complicated expressions associated with the coexistence steady-state~$\phi^{CE}$, and the characteristic polynomial~\eqref{eq:Plambda}.  These results include: 1) The reduction of the nonlinear algebraic system of eight equations for~$\phi^{CE}$ to a cubic equation, see Proposition~\ref{prop:reduceSystem}. 
2) Lemma~\ref{lem:1minuss_0_lower_bound_gnrl_mu} that defines the auxiliary expression~$A_i$,~$i=1,2$, and proves that they are strictly positive.
3) Expressions~\eqref{eq:P10_P20} for key coefficients of the characteristic polynomial in terms of auxiliary expressions such as~$A_i$.

Theorem~\ref{theo:uniform} provides a uniform approximation of the eigenvalues of the Jacobian at~$\phi_{CE}$ in the feasible parameter space and at an accuracy sufficient to determine the stability of~$\phi_{CE}$.  This result seems to open the way to a study of the stability of~$\phi_{CE}$ outside of~$\Gamma^*$.  Yet, the resulting expressions are long and cumbersome, and do not lend themselves to investigation except in carefully selected cases. Based on numerical observations such as those presented in Figure~\ref{fig:bifurcationGraph}, we conjecture that the steady state of coexistence is linearly stable outside the oscillatory region characterized in this work.

This research targets the scenario~$\sigma_{21}>0$, and our findings hold as~$\sigma_{21}$ approaches zero. To directly address the case where~$\sigma_{21}=0$, adjustments such as omitting~\eqref{eq:Ri} are necessary.  A detailed analysis of this case will be published elsewhere.

The study~\cite{gavish2024revisiting} considers the system~\eqref{eq:model} in the case where one strain is significantly more transmissible than the other,~$\mathcal{R}_1\gg\mathcal{R}_2$, and shows that under broad conditions, the system converges to an endemic steady state of coexistence.  Thus, the competitive exclusion principle does not unconditionally hold beyond the established case of complete immunity, and that in such a case the co-circulation of the strains is not oscillatory but endemic.  The study~\cite{gavish2024revisiting} leaves open the question whether the high transmissibility of both strains also ensures convergence to an endemic steady state of coexistence.  The current work provides an answer to this question by showing that, since~$\Gamma^*$ is unbounded, even when~$\mathcal{R}_1\gg1$ and~$\mathcal{R}_2\gg1$, the steady-state of coexistence is linearly unstable in certain regions of the parameter space. 

Our findings challenge the current understanding of mechanisms for self-oscillations in epidemiological systems, as they point to a narrow oscillatory region in an unexpected parameter domain.   We focus on scenarios where the mortality rate, $\mu$, is much lower than the recovery rates, $\mu \ll \gamma_i$.  The question of whether our findings extend to higher values of $\mu$ arises. For example, could there be an oscillatory region around $\Gamma^*$ when $\mu = \mathcal{O}(\gamma_i)$? Will this region remain narrow near $\Gamma^*$, or become broader? Our numerical analysis indicates that higher values of $\mu$ do not produce oscillations, suggesting that our results are genuinely confined to~$\mu \ll \gamma_i$. Somehow similar behaviors are observed in enzymatic systems with substrate inhibition kinetics~\cite{shen1994role}, where the oscillatory region is so narrow that it cannot be detected experimentally. The relevance of our results to real biological systems remains unresolved.  A systematic study of the oscillatory mechanism and its possible relation to the oscillations in~\cite{shen1994role} will be published elsewhere.

This study began as an effort to extend the analysis in~\cite{chung2016dynamics} to a broader epidemic model. We discovered an oversight in~\cite{chung2016dynamics}, and its correction revealed a new mechanism for self-oscillations. This mechanism may also apply to epidemic models considering additional factors such as changes in infectivity, quarantine, or age structure. Further examination of these models, potentially uncovering new oscillatory phenomena or providing new insights into known results, will be presented elsewhere.
\backmatter

\bmhead{Acknowledgements}
This research was supported by the Israel Science Foundation (grant no. 3730/20) within the KillCorona-Curbing Coronavirus Research Program, and by the Israeli Science Foundation (ISF) grant 1596/23.

The author is grateful to Guy Katriel for his valuable comments and discussions on this research.




\begin{appendices}

\section{Numerical verification - stability analysis}\label{app:verificationLambda}

\begin{figure}[ht]\centering
	\includegraphics[width=0.7\textwidth]{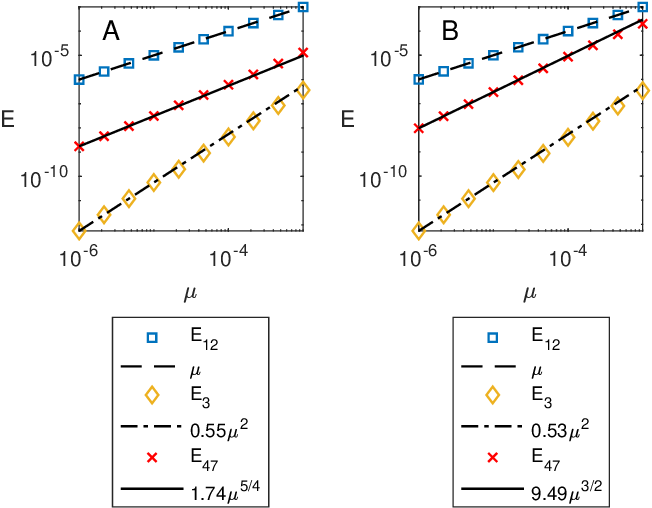}
	\caption{Graph of approximation error~$E_{1,2}=\max\{E_1,E_2\}$ ($\square$),~$E_3$ ($\diamond$) and~$E_{4-7}=\max\{E_4,\cdots,E_7\}$ ($\times$) as a function of~$\mu$ where~$E_i$ is given by~\eqref{eq:Ei} with the parameters used in: A)~Figure~\ref{fig:Iit_sigma05}A.  Super-imposed are the curves~$\mu$ (dashes),~$0.55\mu^2$ (dash-dots) and~$1.74\,\tau\sqrt[4]{\tau}$ (solid).  B)~Figure~\ref{fig:Iit_sigma05}C.  Super-imposed are the curves~$\mu$ (dashes),~$0.53\mu^2$ (dash-dots) and~$9.49\,\tau\sqrt{\tau}$ (solid). \label{fig:err_as_func_of_mu}
 }
\end{figure}Figure~\ref{fig:err_as_func_of_mu}A 
presents the approximation errors
\begin{equation}\label{eq:Ei}
E_i=|\lambda_i-\lambda_i^{\rm approx}|,
\end{equation}
where~$\lambda_i^{\rm approx}$ is the approximation~\eqref{eq:expansion_with_si} to the root~$\lambda_i$ of~\eqref{eq:Plambda} for the set of parameters used in Figure~\ref{fig:Iit_sigma05}A but for a range of values of~$\mu$.  As expected, the error~$E_{1,2}=\max_{i=1,2}E_i=\mathcal{O}(\mu)$ and~$E_{3}=\mathcal{O}(\mu^2)$, see~\eqref{eq:lambda_123}. 
The set of parameters used in Figure~\ref{fig:Iit_sigma05} corresponds to a case~$\Delta=0$, thus we expect that~$\{\lambda_i\}_{i=4}^7$ behaves as~\eqref{eq:expansion_with_si} with an~$\mathcal{O}(\mu\sqrt[4]{\mu})$ error.  Indeed, we observe that~$E_{4-7}=\max\{E_4,\cdots,E_7\} \sim 1.74\mu\sqrt[4]{\mu}$.  

Figure~\ref{fig:err_as_func_of_mu}B 
presents the approximation errors~\eqref{eq:Ei} for the set of parameters used in Figure~\ref{fig:Iit_sigma05}C but for a range of values of~$\mu$.  In this case,~$\Delta\approx 0.002$, so that we expect~$\{\lambda_i\}_{i=4}^7$ to behave according to~\eqref{eq:expansion_with_qi} with an~$\mathcal{O}(\mu\sqrt{\mu})$ error.  Indeed, we observe that~$E_{4-7}=\max\{E_4,\cdots,E_7\}\sim 9.49\mu\sqrt{\mu}$. 

\section{Proof of Theorem~\ref{theo:main_result}}\label{app:proof_main_result}
The proof of Theorem~\ref{theo:main_result} relies only on the leading order approximation~$s_0$ of~$S^*$, and on the coefficients~$P_{10}, P_{11}, P_{20}$ and~$P_{30}$ of the characteristic polynomial~\eqref{eq:Plambda}.   We now show that these quantities are not effects by perturbing~$\sigma_{12}=1-\mu s_{12}$.
Indeed, an~$\mathcal{O}(\mu)$ perturbation from~$\sigma_{12}=1$ does not effect the leading order approximation~$s_0$ of~$S^*$, but rather its correction terms.  Similarly, an~$\mathcal{O}(\mu)$ perturbation from~$\sigma_{12}=1$, results in a change to~$P_{22},P_{23},P_{31},\cdots$,
\[
P(\lambda;\sigma_{12}=1-s_{12}\mu)-P(\lambda;\sigma_{12}=1)=\mu^2\lambda^3 (P_{22}+\lambda P_{23}+\cdots)+\mu^3\lambda(P_{31}+\lambda P_{32}+\cdots)+\mathcal{O}(\mu^4).
\]

\section{Numerical methods}\label{app:numerical_methods}
We use the standard Matlab {\tt ode45} command to compute the solution of~\eqref{eq:model}, e.g., to produce the graphs of Figures~\ref{fig:Iit_sigma05} and~\ref{fig:Iit_sigma05}.

Figure~\ref{fig:bifurcationGraph} shows the bifurcation graph, which necessitates calculating~$\phi_{CE}$ using certain parameters, followed by determining the eigenvalues of the Jacobian from~\eqref{eq:model} around~$\phi_{CE}$. To find the compartment sizes of~$\phi_{CE}$, we numerically solve the steady-state nonlinear algebraic system \eqref{eq:model} with Matlab's {\tt fsolve} command. The Jacobian analytic form is provided and the leading-order approximation~\eqref{eq:approxS*_andIi*} is used as an initial estimate. The eigenvalues of the Jacobian at~$\phi_{CE}$ are computed using Matlab's standard {\tt eig} function.
\end{appendices}


\begin{thebibliography}{19}
\ifx \bisbn   \undefined \def \bisbn  #1{ISBN #1}\fi
\ifx \binits  \undefined \def \binits#1{#1}\fi
\ifx \bauthor  \undefined \def \bauthor#1{#1}\fi
\ifx \batitle  \undefined \def \batitle#1{#1}\fi
\ifx \bjtitle  \undefined \def \bjtitle#1{#1}\fi
\ifx \bvolume  \undefined \def \bvolume#1{\textbf{#1}}\fi
\ifx \byear  \undefined \def \byear#1{#1}\fi
\ifx \bissue  \undefined \def \bissue#1{#1}\fi
\ifx \bfpage  \undefined \def \bfpage#1{#1}\fi
\ifx \blpage  \undefined \def \blpage #1{#1}\fi
\ifx \burl  \undefined \def \burl#1{\textsf{#1}}\fi
\ifx \doiurl  \undefined \def \doiurl#1{\url{https://doi.org/#1}}\fi
\ifx \betal  \undefined \def \betal{\textit{et al.}}\fi
\ifx \binstitute  \undefined \def \binstitute#1{#1}\fi
\ifx \binstitutionaled  \undefined \def \binstitutionaled#1{#1}\fi
\ifx \bctitle  \undefined \def \bctitle#1{#1}\fi
\ifx \beditor  \undefined \def \beditor#1{#1}\fi
\ifx \bpublisher  \undefined \def \bpublisher#1{#1}\fi
\ifx \bbtitle  \undefined \def \bbtitle#1{#1}\fi
\ifx \bedition  \undefined \def \bedition#1{#1}\fi
\ifx \bseriesno  \undefined \def \bseriesno#1{#1}\fi
\ifx \blocation  \undefined \def \blocation#1{#1}\fi
\ifx \bsertitle  \undefined \def \bsertitle#1{#1}\fi
\ifx \bsnm \undefined \def \bsnm#1{#1}\fi
\ifx \bsuffix \undefined \def \bsuffix#1{#1}\fi
\ifx \bparticle \undefined \def \bparticle#1{#1}\fi
\ifx \barticle \undefined \def \barticle#1{#1}\fi
\bibcommenthead
\ifx \bconfdate \undefined \def \bconfdate #1{#1}\fi
\ifx \botherref \undefined \def \botherref #1{#1}\fi
\ifx \url \undefined \def \url#1{\textsf{#1}}\fi
\ifx \bchapter \undefined \def \bchapter#1{#1}\fi
\ifx \bbook \undefined \def \bbook#1{#1}\fi
\ifx \bcomment \undefined \def \bcomment#1{#1}\fi
\ifx \oauthor \undefined \def \oauthor#1{#1}\fi
\ifx \citeauthoryear \undefined \def \citeauthoryear#1{#1}\fi
\ifx \endbibitem  \undefined \def \endbibitem {}\fi
\ifx \bconflocation  \undefined \def \bconflocation#1{#1}\fi
\ifx \arxivurl  \undefined \def \arxivurl#1{\textsf{#1}}\fi
\csname PreBibitemsHook\endcsname

\bibitem[\protect\citeauthoryear{Martcheva}{2015}]{martcheva2015introduction}
\begin{bbook}
\bauthor{\bsnm{Martcheva}, \binits{M.}}:
\bbtitle{An Introduction to Mathematical Epidemiology}
vol. \bseriesno{61}.
\bpublisher{Springer},
\blocation{New York}
(\byear{2015})
\end{bbook}
\endbibitem

\bibitem[\protect\citeauthoryear{Wormser and
  Pourbohloul}{2008}]{wormser2008modeling}
\begin{botherref}
\oauthor{\bsnm{Wormser}, \binits{G.P.}},
\oauthor{\bsnm{Pourbohloul}, \binits{B.}}:
Modeling Infectious Diseases in Humans and Animals By Matthew James Keeling and
  Pejman Rohani Princeton, NJ: Princeton University Press, 2008. 408 pp.,
  Illustrated (hardcover).
The University of Chicago Press
(2008)
\end{botherref}
\endbibitem

\bibitem[\protect\citeauthoryear{Andreasen
  et~al.}{1997}]{andreasen1997dynamics}
\begin{barticle}
\bauthor{\bsnm{Andreasen}, \binits{V.}},
\bauthor{\bsnm{Lin}, \binits{J.}},
\bauthor{\bsnm{Levin}, \binits{S.A.}}:
\batitle{The dynamics of cocirculating influenza strains conferring partial
  cross-immunity}.
\bjtitle{Journal of mathematical biology}
\bvolume{35},
\bfpage{825}--\blpage{842}
(\byear{1997})
\end{barticle}
\endbibitem

\bibitem[\protect\citeauthoryear{Gomes and Medley}{2002}]{gomes2002dynamics}
\begin{bchapter}
\bauthor{\bsnm{Gomes}, \binits{M.G.M.}},
\bauthor{\bsnm{Medley}, \binits{G.F.}}:
\bctitle{Dynamics of multiple strains of infectious agents coupled by
  cross-immunity: a comparison of models}.
In: \bbtitle{Mathematical Approaches for Emerging and Reemerging Infectious
  Diseases: Models, Methods, and Theory},
pp. \bfpage{171}--\blpage{191}
(\byear{2002}).
\bcomment{Springer}
\end{bchapter}
\endbibitem

\bibitem[\protect\citeauthoryear{Dawes and Gog}{2002}]{dawes2002onset}
\begin{barticle}
\bauthor{\bsnm{Dawes}, \binits{J.}},
\bauthor{\bsnm{Gog}, \binits{J.}}:
\batitle{The onset of oscillatory dynamics in models of multiple disease
  strains}.
\bjtitle{Journal of Mathematical Biology}
\bvolume{45}(\bissue{6}),
\bfpage{471}--\blpage{510}
(\byear{2002})
\end{barticle}
\endbibitem

\bibitem[\protect\citeauthoryear{Lin et~al.}{1999}]{lin1999dynamics}
\begin{barticle}
\bauthor{\bsnm{Lin}, \binits{J.}},
\bauthor{\bsnm{Andreasen}, \binits{V.}},
\bauthor{\bsnm{Levin}, \binits{S.A.}}:
\batitle{Dynamics of influenza a drift: the linear three-strain model}.
\bjtitle{Mathematical biosciences}
\bvolume{162}(\bissue{1-2}),
\bfpage{33}--\blpage{51}
(\byear{1999})
\end{barticle}
\endbibitem

\bibitem[\protect\citeauthoryear{Ferguson et~al.}{1999}]{ferguson1999effect}
\begin{barticle}
\bauthor{\bsnm{Ferguson}, \binits{N.}},
\bauthor{\bsnm{Anderson}, \binits{R.}},
\bauthor{\bsnm{Gupta}, \binits{S.}}:
\batitle{The effect of antibody-dependent enhancement on the transmission
  dynamics and persistence of multiple-strain pathogens}.
\bjtitle{Proceedings of the National Academy of Sciences}
\bvolume{96}(\bissue{2}),
\bfpage{790}--\blpage{794}
(\byear{1999})
\end{barticle}
\endbibitem

\bibitem[\protect\citeauthoryear{Gupta et~al.}{1998}]{gupta1998chaos}
\begin{barticle}
\bauthor{\bsnm{Gupta}, \binits{S.}},
\bauthor{\bsnm{Ferguson}, \binits{N.}},
\bauthor{\bsnm{Anderson}, \binits{R.}}:
\batitle{Chaos, persistence, and evolution of strain structure in antigenically
  diverse infectious agents}.
\bjtitle{Science}
\bvolume{280}(\bissue{5365}),
\bfpage{912}--\blpage{915}
(\byear{1998})
\end{barticle}
\endbibitem

\bibitem[\protect\citeauthoryear{Nu{\~n}o et~al.}{2005}]{nuno2005dynamics}
\begin{barticle}
\bauthor{\bsnm{Nu{\~n}o}, \binits{M.}},
\bauthor{\bsnm{Feng}, \binits{Z.}},
\bauthor{\bsnm{Martcheva}, \binits{M.}},
\bauthor{\bsnm{Castillo-Chavez}, \binits{C.}}:
\batitle{Dynamics of two-strain influenza with isolation and partial
  cross-immunity}.
\bjtitle{SIAM Journal on Applied Mathematics}
\bvolume{65}(\bissue{3}),
\bfpage{964}--\blpage{982}
(\byear{2005})
\end{barticle}
\endbibitem

\bibitem[\protect\citeauthoryear{Nuno et~al.}{2008}]{nuno2008mathematical}
\begin{botherref}
\oauthor{\bsnm{Nuno}, \binits{M.}},
\oauthor{\bsnm{Castillo-Chavez}, \binits{C.}},
\oauthor{\bsnm{Feng}, \binits{Z.}},
\oauthor{\bsnm{Martcheva}, \binits{M.}}:
Mathematical models of influenza: the role of cross-immunity, quarantine and
  age-structure.
Mathematical Epidemiology,
349--364
(2008)
\end{botherref}
\endbibitem

\bibitem[\protect\citeauthoryear{Thieme}{2007}]{thieme2007pathogen}
\begin{botherref}
\oauthor{\bsnm{Thieme}, \binits{H.R.}}:
Pathogen competition and coexistence and the evolution of virulence.
Mathematics for Life Science and Medicine,
123--153
(2007)
\end{botherref}
\endbibitem

\bibitem[\protect\citeauthoryear{Kuddus et~al.}{2022}]{kuddus2022analysis}
\begin{barticle}
\bauthor{\bsnm{Kuddus}, \binits{M.A.}},
\bauthor{\bsnm{McBryde}, \binits{E.S.}},
\bauthor{\bsnm{Adekunle}, \binits{A.I.}},
\bauthor{\bsnm{Meehan}, \binits{M.T.}}:
\batitle{Analysis and simulation of a two-strain disease model with nonlinear
  incidence}.
\bjtitle{Chaos, Solitons \& Fractals}
\bvolume{155},
\bfpage{111637}
(\byear{2022})
\end{barticle}
\endbibitem

\bibitem[\protect\citeauthoryear{Castillo-Chavez
  et~al.}{1989}]{castillo1989epidemiological}
\begin{barticle}
\bauthor{\bsnm{Castillo-Chavez}, \binits{C.}},
\bauthor{\bsnm{Hethcote}, \binits{H.W.}},
\bauthor{\bsnm{Andreasen}, \binits{V.}},
\bauthor{\bsnm{Levin}, \binits{S.A.}},
\bauthor{\bsnm{Liu}, \binits{W.M.}}:
\batitle{Epidemiological models with age structure, proportionate mixing, and
  cross-immunity}.
\bjtitle{Journal of mathematical biology}
\bvolume{27},
\bfpage{233}--\blpage{258}
(\byear{1989})
\end{barticle}
\endbibitem

\bibitem[\protect\citeauthoryear{Chung and Lui}{2016}]{chung2016dynamics}
\begin{barticle}
\bauthor{\bsnm{Chung}, \binits{K.}},
\bauthor{\bsnm{Lui}, \binits{R.}}:
\batitle{Dynamics of two-strain influenza model with cross-immunity and no
  quarantine class}.
\bjtitle{Journal of Mathematical Biology}
\bvolume{73},
\bfpage{1467}--\blpage{1489}
(\byear{2016})
\end{barticle}
\endbibitem

\bibitem[\protect\citeauthoryear{Nemhauser et~al.}{2023}]{CDCyellowbook2024}
\begin{botherref}
\oauthor{\bsnm{Nemhauser}, \binits{J.}},
\oauthor{\bsnm{LaRocque}, \binits{R.}},
\oauthor{\bsnm{Alvarado-Ramy}, \binits{F.}},
\oauthor{\bsnm{Angelo}, \binits{K.}},
\oauthor{\bsnm{Ericsson}, \binits{C.}},
\oauthor{\bsnm{Gertz}, \binits{A.}},
\oauthor{\bsnm{Kozarsky}, \binits{P.}},
\oauthor{\bsnm{Ostroff}, \binits{S.}},
\oauthor{\bsnm{Ryan}, \binits{E.}},
\oauthor{\bsnm{Shlim}, \binits{D.}},
\oauthor{\bsnm{Stauffer}, \binits{W.}},
\oauthor{\bsnm{Weinberg}, \binits{M.}},
\oauthor{\bsnm{Wilson}, \binits{M.E.}},
\oauthor{\bsnm{Keir}, \binits{J.}},
\oauthor{\bsnm{Leidel}, \binits{L.}},
\oauthor{\bsnm{Crowe}, \binits{S.}},
\oauthor{\bsnm{Guinn}, \binits{A.}}:
{CDC} {Y}ellow {B}ook 2024: Health information for international travel.
Oxford University Press
(2023)
\end{botherref}
\endbibitem

\bibitem[\protect\citeauthoryear{Holmes}{2012}]{holmes2012introduction}
\begin{bbook}
\bauthor{\bsnm{Holmes}, \binits{M.H.}}:
\bbtitle{Introduction to Perturbation Methods}
vol. \bseriesno{20}.
\bpublisher{Springer},
\blocation{New York}
(\byear{2012})
\end{bbook}
\endbibitem

\bibitem[\protect\citeauthoryear{Gupta et~al.}{1994}]{Gupta94}
\begin{barticle}
\bauthor{\bsnm{Gupta}, \binits{S.}},
\bauthor{\bsnm{Swinton}, \binits{J.}},
\bauthor{\bsnm{Anderson}, \binits{R.M.}}:
\batitle{Theoretical studies of the effects of heterogeneity in the parasite
  population on the transmission dynamics of malaria}.
\bjtitle{Proceedings of the Royal Society of London. Series B: Biological
  Sciences}
\bvolume{256}(\bissue{1347}),
\bfpage{231}--\blpage{238}
(\byear{1994})
\doiurl{10.1098/rspb.1994.0075}
\end{barticle}
\endbibitem

\bibitem[\protect\citeauthoryear{Gavish}{2024}]{gavish2024revisiting}
\begin{botherref}
\oauthor{\bsnm{Gavish}, \binits{N.}}:
Revisiting the exclusion principle in epidemiology at its ultimate limit.
arXiv preprint arXiv:2405.09813
(2024)
\end{botherref}
\endbibitem

\bibitem[\protect\citeauthoryear{Shen and Larter}{1994}]{shen1994role}
\begin{barticle}
\bauthor{\bsnm{Shen}, \binits{P.}},
\bauthor{\bsnm{Larter}, \binits{R.}}:
\batitle{Role of substrate inhibition kinetics in enzymatic chemical
  oscillations}.
\bjtitle{Biophysical journal}
\bvolume{67}(\bissue{4}),
\bfpage{1414}--\blpage{1428}
(\byear{1994})
\end{barticle}
\endbibitem

\end{thebibliography}

\end{document}